\documentclass[11pt]{article}
\usepackage[margin=1in]{geometry}

\usepackage{amsmath,amsbsy}
\usepackage{amsthm} %
\usepackage{amssymb}
\usepackage{microtype} %
\usepackage{lmodern} %
\usepackage{setspace}
\usepackage{color}
\usepackage{xcolor}
\usepackage{algorithm}
\usepackage[noend]{algpseudocode}
\usepackage{multirow}
\usepackage{xspace}
\usepackage{threeparttable}
\usepackage{booktabs}
\usepackage{soulpos} %
\definecolor{MyGreen}{rgb}{0.1333,0.5451,0.1333}
\usepackage[linktocpage=true, 
	pagebackref=true,colorlinks,
  linkcolor=black,
	citecolor=MyGreen,
	bookmarks,bookmarksopen,bookmarksnumbered]
	{hyperref}
\usepackage{epsdice}

\usepackage{mathtools}
\DeclarePairedDelimiter\bra{\langle}{\rvert}
\DeclarePairedDelimiter\ket{\lvert}{\rangle}
\DeclarePairedDelimiterX\braket[2]{\langle}{\rangle}{#1\,\delimsize\vert\,\mathopen{}#2}

\usepackage{pifont}

\let\originalleft\left
\let\originalright\right
\renewcommand{\left}{\mathopen{}\mathclose\bgroup\originalleft}
\renewcommand{\right}{\aftergroup\egroup\originalright}

\newcommand{\set}[1]{\{ #1 \}}
\newcommand{\Set}[1]{\left\{ #1 \right\}}

\newcommand{\lt}{\left}
\newcommand{\rt}{\right}

\newcommand{\local}{\ensuremath{\mathsf{LOCAL}}}

\newcommand{\congest}{\ensuremath{\mathsf{CONGEST}}}

\makeatletter
\newtheorem*{rep@theorem}{\rep@title}
\newcommand{\newreptheorem}[2]{%
\newenvironment{rep#1}[1]{%
\def\rep@title{#2 \ref{##1}}%
\begin{rep@theorem}[restated]}%
{\end{rep@theorem}}}
\makeatother
\newreptheorem{lemma}{Lemma}
\newreptheorem{theorem}{Theorem}
\newreptheorem{corollary}{Corollary}

\newboolean{short}
\newcommand{\onlyShort}[1]{\ifthenelse{\boolean{short}}{#1}{}}
\newcommand{\onlyLong}[1]{\ifthenelse{\boolean{short}}{}{#1}}

\theoremstyle{plain}
\newtheorem{lemma}{Lemma}
\newtheorem{theorem}{Theorem}

\newtheorem{fact}{Fact} %
\newtheorem{open problem}{Open Problem}
\theoremstyle{remark}

\usepackage[most]{tcolorbox}
\tcbuselibrary{theorems}

\newcommand{\ann}[1]{%
\text{\footnotesize(#1)}\quad}

\usepackage{thmtools}
\usepackage{mathtools}
\usepackage[T1]{fontenc}

\title{The Quantum Message Complexity of Distributed Wake-Up with Advice} 

\date{}

\author{Peter Robinson\\
\small{School of Computer \& Cyber Sciences}\\
\small{Augusta University}
\and
Ming Ming Tan \\
\small{School of Computer \& Cyber Sciences}\\
\small{Augusta University}
}

\newcommand{\kt}{\ensuremath{\mathsf{KT}}}

\newcommand{\id}{\ensuremath{\text{id}}}

\newcommand{\arad}{\ensuremath{\rho_{\textsl{\tiny awk}}}}

\newcommand{\qsearch}{\mathsf{QuantumSearch}}
\newcommand{\iterqsearch}{\mathsf{IteratedQuantumSearch}}

\newcommand{\EReg}{\mathsf{QER}}
\newcommand{\ECReg}{\mathsf{CER}}
\newcommand{\RReg}{\mathsf{QRR}}
\newcommand{\RCReg}{\mathsf{CRR}}
\newcommand{\noneawake}{\textsf{NULL}}

\newcommand{\matching}{\mathsf{Matching}}
\newcommand{\desc}{\mathsf{Descriptor}}

\begin{document}
\maketitle
\thispagestyle{empty}
\pagestyle{empty}
\begin{abstract}
We consider the distributed wake-up problem with advice, where nodes are equipped with initial knowledge about the network at large. 
After the adversary awakens a subset of nodes, an oracle computes a bit string (``advice'') for each node, and the goal is to wake up all sleeping nodes efficiently.
We present the first upper and lower bounds on the message complexity for wake-up in the quantum routing model, introduced by Dufoulon, Magniez, and Pandurangan (PODC 2025).
In more detail, we give a distributed advising scheme that, assuming $\alpha$ bits of advice per node, wakes up all nodes with a message complexity of $O\lt( \sqrt{\frac{n^3}{2^{\max\{\lt\lfloor (\alpha-1)/2 \rt\rfloor,0\}}}}\cdot\log n \rt)$  with high probability. %
Our result breaks the $\Omega\lt( \frac{n^2}{2^\alpha} \rt)$ barrier known to hold for classical (i.e., non-quantum) algorithms in the port numbering model when considering sufficiently dense graphs. 

To complement our algorithm, we give a lower bound on the  message complexity for distributed quantum algorithms: %
By leveraging a lower bound result for the single-bit descriptor problem in the query complexity model, we show that wake-up has a quantum message complexity of $\Omega\lt( n^{3/2} \rt)$ without advice, which holds independently of how much time we allow.  
In the setting where an adversary decides which nodes start the algorithm, most graph problems of interest implicitly require solving wake-up, and thus the same lower bound also holds for other fundamental problems such as   single-source broadcast and spanning tree construction.
\end{abstract}
\pagestyle{plain}
\setcounter{page}{1}
\section{Introduction} \label{sec:intro}

We consider the fundamental wake-up problem in distributed message passing networks, where an adversary wakes up some arbitrary subset of nodes, and the goal is to execute a distributed algorithm that wakes up all sleeping nodes efficiently.
The wake-up problem has been studied in the ``distributed computing with advice'' setting, pioneered in \cite{fraigniaud2006oracle,fraigniaud2009distributed}, where an oracle first observes the entire network and then equips each node with a short bit string (called ``advice'').
It is easy to see that the wake-up problem can be solved with a time complexity linear in the network diameter if we allow nodes to use all communication links in every round, which is clearly optimal in the worst case. 
Similarly, it is not too difficult to observe that optimal time as well as optimal message complexity can be obtained if we allow the oracle to provide $\Theta\lt( \log n \rt)$ bits of advice per node (on average); see \cite{fraigniaud2006oracle}.
On the other hand, \cite{wakeup} showed that $\Omega\lt( n^{2} \rt)$ presents an insurmountable lower bound  on the message complexity in the (classical) port numbering model if nodes do not have access to advice.
Thus, the main challenge is designing wake-up algorithms that are message-efficient by sending across as few edges of the network as possible, while requiring only a small amount of advice per node.
Even though quantum advantages have been well established for the time complexity of distributed graph algorithms, e.g., see \cite{le2018sublinear,censor2022quantum}, and very recently also for locally-checkable labeling problems~\cite{balliu2026distributed}, for a long time it remained unclear  whether quantum capabilities can also lead to improved message complexity bounds.  

In a recent breakthrough, Dufoulon, Magniez, and Pandurangan~\cite{dufoulon2025quantum} introduced the quantum routing model and showed that there is indeed a \emph{quantum communication advantage} for distributed leader election, thereby obtaining a polynomial improvement over the existing lower bounds known for the classical setting~\cite{jacm15,kutten2015sublinear}. 
Even more recently, Le Gall, Luce, Marchand, and Roget~\cite{gall2025exponential} presented another instance of a quantum communication advantage in the same model, by demonstrating that even an exponential gap is possible for certain routing problems in networks. 
\cite{dufoulon2025quantum} and \cite{gall2025exponential}  assume that all nodes are awake and start executing the algorithm simultaneously, which means that we cannot directly use their algorithms in our setting. 

While existing algorithms achieve near-optimal complexity bounds for wake-up in the classical distributed setting (see \cite{wakeup,fraigniaud2006oracle}), in this work, we initiate the study of the wake-up problem in the distributed quantum routing model of~\cite{dufoulon2025quantum}, where nodes may communicate using qubits and perform quantum computation. 
We believe that our work takes the first step towards understanding the impact of initial knowledge (``advice'') on the message complexity of distributed quantum algorithms.

\subsection{Related Work} \label{sec:related}

\paragraph{The Wake-Up Problem and Adversarial Wake-Up.}
Assuming that an adversary decides which nodes start executing is a common assumption in works on distributed graph algorithms and leader election; e.g., see \cite{gallager1983distributed,afek1991time,singh1992leader,kutten2020singularly,dufoulon2022almost,wakeup}. 
The wake-up problem with advice was first introduced by Fraigniaud, Ilcinkas, and Pelc,  in \cite{fraigniaud2006oracle}, who showed that wake-up is possible with $O(n)$ (classical) messages if nodes receive $O(n \log n)$ bits of advice in total, summed over all nodes.
However, the worst case length of advice per node can be up to $O(n)$ in their setting.
In addition, \cite{fraigniaud2006oracle} proved that $\Theta\lt( n \log n \rt)$ bits of total advice is indeed a tight bound for achieving $O(n)$ message complexity with any deterministic algorithm.
More recently, \cite{wakeup} gave several advising schemes that trade-off messages, time, and advice length. 
In particular, they show how to obtain optimality in all three complexity measures up to logarithmic factors, and also give the first lower bound for the port numbering model that holds for randomized algorithms, and shows that any, {possibly randomized}, algorithm with an expected message complexity of $O(\frac{n^2}{2^{\alpha}\log n})$ must have a worst case advice length of $\Omega\lt(\alpha  \rt)$ bits per node.
We refer the reader to Table~1 of \cite{wakeup} for additional results on the wake-up problem in other settings such as the asynchronous model, the $\local$ model, and the $\kt_1$ assumption, where nodes start out knowing their neighbors' IDs.

Note that the wake-up problem considered here is different from the variant studied in radio networks (e.g., \cite{DBLP:conf/icdcn/BiswasY26,DBLP:journals/dc/DaniGHP23}) and the \emph{sleeping model}~\cite{DBLP:conf/podc/ChatterjeeGP20}, where the algorithm (and not the adversary!) controls which nodes are asleep and at what time.

\paragraph{Distributed Quantum Computing.}
The seminal work of \cite{DBLP:conf/podc/ElkinKNP14} took the first step towards understanding the time complexity of distributed quantum algorithms for graph problems.
Their results implied that there is little hope for a quantum speed-up for several fundamental problems in the $\congest$ model, such as computing minimum spanning trees, shortest paths, and minimum cuts. 
On the other hand, a quantum advantage is possible for certain graph problems: For instance, \cite{le2018sublinear} demonstrated a quantum speed up for computing the graph diameter and \cite{fraigniaud2024even,van2022framework,censor2022quantum} showed that quantum algorithms can obtain a faster time complexity for subgraph detection.
Apart from the aforementioned works of  \cite{dufoulon2025quantum,gall2025exponential}, we are not aware of other prior results on the message complexity of distributed quantum algorithms for graph problems.
The related (but not equivalent) metric of \emph{communication cost per node} was studied in the context of fault-tolerant consensus by \cite{DBLP:conf/icalp/HajiaghayiKO24}, who give a quantum algorithm that attains a polylogarithmic time complexity while limiting  the number of communication bits sent per node to $n^{\epsilon}$, for any constant $\epsilon>0$. 
Note that an algorithm with low communication cost per node may still incur a high message complexity if each node broadcasts messages across its incident communication links, even if the messages themselves are short. 

\paragraph{Distributed Computing with Advice}
Understanding the impact of advice on the performance of distributed algorithms has seen significant interest over the last two decades. Apart from the already mentioned prior work on the wake-up problem~\cite{fraigniaud2006oracle,wakeup}, the work of \cite{fraigniaud2009distributed} investigated whether advice can break the seminal $\Omega\lt( \log^*n \rt)$-round lower bound of \cite{linial1992locality} for computing a 3-coloring of a cycle. 
More recently, \cite{balliu2025distributed} studied the amount of advice required for solving a graph problem with a time complexity that is a function of the maximum node degree. 
In particular, \cite{balliu2025distributed} show that a single bit of advice per node is sufficient for locally-checkable labeling (LCL) problems~\cite{naor1993can} in graphs exhibiting sub-exponential growth. 
The concept of advice is related to proof labeling schemes~\cite{DBLP:journals/dc/KormanKP10,feuilloley2021redundancy}, which are also known as locally checkable proofs~\cite{goos2011locally}, where a centralized prover assigns labels (``advice'') to the nodes and a distributed verifier must verify a certain graph predicate. 

\subsection{Contributions and Technical Overview} \label{sec:contrib}

We give the first distributed quantum algorithm for the wake-up problem in the port numbering model, obtaining a polynomial improvement in the message complexity compared to the lower bound of \cite{wakeup} known to hold in the classical $\local$ model. We state the time in terms of the \emph{awake distance} $\arad$, introduced in \cite{wakeup}, which is defined as the maximum (over all nodes) shortest path distance of any sleeping node to its closest initially awake node.

\begin{theorem} \label{thm:q_ub_advice}
Consider the synchronous port numbering $\congest$ model, and suppose that an oracle assigns up to $\alpha\le\log_2n$ bits of advice per node.
There is a quantum distributed algorithm that solves wake-up with high probability with a message complexity of  $O\lt( \sqrt{\frac{n^3}{2^{\max\{\lt\lfloor (\alpha-1)/2 \rt\rfloor,0\}}}}\cdot\log n \rt)$ and a time complexity of $O\lt( \arad \cdot n^2\log n \rt)$ rounds.
\end{theorem}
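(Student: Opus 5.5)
We would combine a classical advising scheme that sparsifies the search space with Grover-style quantum search over ports. The algorithm is driven by the initially awake nodes, which repeatedly look for a sleeping neighbor. Let $k=\max\{\lfloor(\alpha-1)/2\rfloor,0\}$. The advice would be used to partition each node's ports into blocks. For example, the oracle fixes a BFS-like forest rooted at the initially awake set $A$, of depth $\arad$. It then tells each node $v$, using about $k$ bits, an index of a group of roughly $\deg(v)/2^{k}$ ports that contains at least one port leading to a not-yet-awake child of $v$ in that forest. The remaining advice bits, roughly half, encode control information, such as whether $v$ has any children at all and in which "phase" it should act. This is why the exponent is $\lfloor(\alpha-1)/2\rfloor$ rather than $\alpha$.

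Because the search space of a node is shrunk by a factor $2^{k}$, it can locate a sleeping neighbor using quantum search (invoking $\qsearch$ / $\iterqsearch$ as available in the quantum routing model of \cite{dufoulon2025quantum}). A port $p$ is marked iff the neighbor behind it is asleep. A sleeping node can answer a query passively: in the routing model, the query qubits are reflected back with a phase determined by a local bit, with no cost beyond the query message.

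First, we would describe the oracle and advice encoding and verify that it fits in $\alpha$ bits. Second, we would specify the quantum search subroutine. A node with search space of size $m\le n/2^{k}$ and $t\ge1$ marked elements finds one with $O(\sqrt{m/t}\log n)$ messages w.h.p., and finds all $t$ of them with $O(\sqrt{mt}\log n)$ messages by iterated search. Third, we would bound the total messages. Node $v$ with $t_v$ sleeping targets pays $O(\sqrt{(n/2^k)t_v}\log n)$. Since $\sum_v t_v\le n$ over at most $n$ nodes, Cauchy--Schwarz gives
$\sum_v\sqrt{t_v}\le\sqrt{n\cdot n}$. The total is therefore $O(\sqrt{n/2^k}\cdot n\log n)=O(\sqrt{n^3/2^k}\log n)$, matching the claim. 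Fourth, we would bound the time. There are $\arad$ layers, and each layer's searches are run sequentially or scheduled with up to $O(n^2\log n)$ rounds of query iterations (with congestion $O(1)$ qubits per edge per round). This gives $O(\arad\, n^2\log n)$ rounds.

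The main obstacle is making quantum search correct in the asynchronous-wakeup setting. The marking of a port depends on other nodes' awake states, which change while searches run concurrently, and sleeping nodes must respond coherently to superposed queries. We would handle this by synchronizing phases through the advice: a node learns its layer index or parity from the control bits and only searches during its designated window, so that marked sets are frozen during each search. We would also rely on the model's assumption that a sleeping node's reply to a query is a fixed unitary determined by its state. The success probability would be boosted to $1-n^{-c}$ by $O(\log n)$ repetitions, and we would take a union bound over all nodes.
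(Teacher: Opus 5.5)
\textbf{There is a gap in your third step, and it is exactly where the paper's key idea lies.} You charge node $v$ only $O(\sqrt{(n/2^k)t_v}\log n)$. That presumes all $t_v$ of its sleeping targets lie in the single advised block of about $n/2^k$ ports. Your advice only guarantees that this block contains \emph{one} target. The others can be spread over up to $\min\{t_v,2^k\}$ blocks, and $k$ bits cannot name them all.

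Any fallback then destroys the savings. For example, take $n/3$ initially awake nodes forming a clique, each with two private pendant sleeping neighbors whose ports lie in different blocks. Each awake node must locate its second neighbor among $\Theta(n)$ unadvised ports. That costs $\Omega(\sqrt{n})$ messages, for $\Omega(n^{3/2})$ in total and no $2^{-k/2}$ gain. Without a fallback, those neighbors simply stay asleep.

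The paper closes this gap with \emph{proxy advice}. Write $\beta$ for your $k$.
\begin{itemize}
\item The oracle assigns targets greedily in ID order, so every sleeping node $w$ is the target of exactly one actor $v$.
\item Hence $w$ can carry a second $\beta$-bit string $\Pi_w$. It names, in $v$'s advice tree $\mathcal{T}_v$, the level-$\beta$ port range of the next not-yet-covered target.
\item $w$ sends $\Pi_w$ back when woken, so $v$ chains through pairwise disjoint ranges, each containing at least one target.
\end{itemize}
This gives $\ell\le t_v$ calls of $\iterqsearch$. By Cauchy--Schwarz, the per-node cost is $O(\sqrt{n/2^\beta}\sqrt{\ell t_v}\log n)=O(\sqrt{n/2^\beta}\,t_v\log n)$. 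That is linear rather than square-root in $t_v$, but it still sums to $O(\sqrt{n^3/2^\beta}\log n)$ because $\sum_v t_v\le n$. Additionally, nodes with $t_v\ge 2^\beta$ get the bit $g_v=1$ and search their whole port range; there are at most $n/2^\beta$ of them.

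This is the real reason for the exponent $\lfloor(\alpha-1)/2\rfloor$. The advice is one bit plus two $\beta$-bit port ranges (the node's own and the proxy it carries), not a block index plus ``control information''.

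\textbf{Your synchronization plan also does not fit the budget.} A layer index needs about $\log\arad$ bits, which may exceed $\alpha$. A parity bit does not stop two same-layer nodes with a common sleeping neighbor from searching concurrently. One would then change the other's marked set mid-search, which Lemma~\ref{lem:grover} forbids.

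The paper spends no advice on scheduling:
\begin{itemize}
\item The node with ID $l$ is the sole actor in phase $l$ of each epoch.
\item A node acts in the epoch after it wakes.
\item Clocks are synchronized by piggybacking the clock value on wake-up messages.
\item The oracle defines the target sets $S_v^{(i)}$ by following precisely this order.
\end{itemize}
As a result, each search sees a frozen marked set, and each advised range and proxy chain refers to nodes that are still asleep when $v$ acts. Your idea of running each layer's searches sequentially is the right direction. But it has to be done via IDs, with the target assignment consistent with that order; otherwise the proxy chain breaks, since a proxy carrier woken by someone else reports to the wrong node.
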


Our advising scheme proceeds in several epochs, each consisting of $n$ phases, and consists of the oracle's computation of the advice followed by a distributed algorithm leveraging this advice to wake up all nodes.
The oracle knows which nodes are awake initially, and each awake node $v$ with ID $i$ will become an actor in phase $i$ of the first epoch.
The goal is to ensure that $v$ has woken up all (currently) sleeping neighbors with high probability before the start of phase $i+1$.
If $v$ has many sleeping neighbors, then the oracle advice simply instructs $v$ to find all sleeping neighbors. 
Otherwise, we need a more careful approach:
The oracle chooses $v$'s advice such that $v$ can narrow down the search space (over its ports) when finding at least one sleeping neighbor $w$ via Grover search. 
Moreover, the oracle deposits additional so called ``proxy advice'' at $w$ that $v$ can use to identify a disjoint search range for finding additional sleeping neighbors. This process continues until all its neighbors are awake.  
The oracle can anticipate the set of sleeping neighbors that are likely to be awoken by each node, and these nodes may become actors in the next epoch.

In Section~\ref{sec:kt0_quantum_lb}, we show that the message complexity of our algorithm is tight up to a logarithmic factor when nodes do not have advice:  

\begin{theorem} \label{thm:lb} 
Any quantum distributed algorithm that solves the wake-up problem (without advice) with sufficiently large constant probability, requires $\Omega\lt( n^{3/2}\rt)$ messages in the port numbering $\local$ model.
This implies the same lower bound for constructing a spanning tree and single-source broadcast, assuming adversarial wake-up. 
\end{theorem}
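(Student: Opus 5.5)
We reduce from the single-bit descriptor problem in quantum query complexity. In that problem the input is $n$ strings (or a string partitioned into $\Theta(n)$ blocks of length $\Theta(n)$), each block containing a hidden marked position. The goal is to output one designated bit associated with the marked positions, and the known quantum query lower bound is $\Omega(n^{3/2})$, since one essentially has to find a marked item in each of $\Theta(n)$ unstructured blocks of size $\Theta(n)$. The plan is to embed an instance of this problem into the port numbering of a dense graph, so that any distributed quantum wake-up algorithm sending $M$ messages yields a query algorithm making $O(M)$ queries.

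\textbf{Construction.} Take two sets $A$ and $B$, each of size $\Theta(n)$. We will choose $A$ to be initially awake and $B$ asleep, except that connectivity between the sleeping part and the awake part is hidden.

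A concrete choice is as follows. Make $A$ a clique of awake nodes and make $B$ sleeping. Give each $a_i\in A$ about $n$ ports leading to "dummy" neighbors, which are already awake nodes in $A$, with exactly one port, at the hidden marked position of block $i$, leading to a sleeping node $b_i\in B$. The bit to be recovered is encoded by which of two possible ports on $b_i$'s side the edge attaches to, or equivalently by which of two sleeping nodes $b_i$ or $b_i'$ it reaches. Solving wake-up means every node in $B$ must receive a message, so essentially every block's marked port must be used.

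\textbf{Simulation.} A centralized query algorithm simulates the whole network. Local computation at every node is free. Every message sent over a port, possibly a superposition over ports in the quantum routing model, is answered by one oracle query to the port-mapping function. A message sent across a port leading to a dummy neighbor can be simulated because the simulator knows the entire graph outside the hidden edges. Hence $M$ messages cost $O(M)$ queries.

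The simulator then reads off the descriptor bits from which $B$-nodes got woken and through which port. Here we must make sure the wake-up guarantee actually forces recovery of the bit, for instance by making $b_i$ versus $b_i'$ distinguishable only through the message. If the wake-up algorithm succeeds with large constant probability, the query algorithm solves the single-bit descriptor instance with constant probability, giving $M=\Omega(n^{3/2})$. Time is irrelevant because the simulation is round-independent. Note that a weaker reduction to $\Theta(n)$ independent Grover searches would only yield the direct-sum-style bound, and the descriptor problem is exactly what supplies that.

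\textbf{Main obstacle.} The hardest step is the faithful simulation. Quantum messages may be sent in superposition over ports, and the awake dummy neighbors also run the algorithm and may react to messages. The simulator must therefore emulate their responses coherently, using at most a constant number of queries per message. We must also rule out the possibility that $b_i$ itself, after waking up, leaks information more cheaply. I would handle this by letting the input determine only the port mapping of $A$-side edges, and by charging a query only when a message traverses an edge whose endpoint depends on the hidden input.

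\textbf{Consequences.} The claims for spanning tree construction and single-source broadcast follow by noting that, under adversarial wake-up, any algorithm for either problem must contact every sleeping node. It therefore solves wake-up with the same message count.
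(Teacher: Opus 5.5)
\textbf{The graph construction.} Your overall strategy is the paper's: simulate a message-efficient algorithm in the query model at $O(1)$ queries per message, then invoke the single-bit descriptor bound. But the graph construction, which is where the paper's key idea lives, does not work as you describe it.

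With $\Theta(n)$ nodes of degree $\Theta(n)$ the awake neighbors must be shared, so putting $b_i$ behind port $h_i$ of $a_i$ disturbs other ports. There are two readings:
\begin{itemize}
\item \emph{$a_i$ stays adjacent to all of $A\setminus\{a_i\}$.} By pigeonhole, no fixed assignment of clique neighbors to the ports $[n]\setminus\{h_i\}$ works for every $h_i$. So the neighbor behind some non-hidden port depends on $h_i$, and an awake neighbor replying with its ID leaks more than ``not here''. For example, with the natural shifted order, $a_i$ binary-searches $h_i$, and wake-up costs only $O(n\log n)$ messages. Also, the ``port-mapping function'' you would query is not the descriptor oracle.
\item \emph{You replace the clique edge $\{a_i,a_k\}$ at port $h_i$.} Then $a_k$ has a dangling port. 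Keeping all degrees and all other port labels input-independent forces you to divert $a_k$'s port toward $a_i$ to a sleeping node of its own.
\end{itemize}
The second reading is exactly the paper's class $\mathcal{H}_n$. The clique port assignment $\pi_C$ is fixed and publicly known, and a hidden perfect matching replaces each edge $\{v_i,v_j\}$ by $\{v_i,w_i\}$ and $\{v_j,w_j\}$ on the same ports. Port $j$ of $v_i$ then leads to $w_i$ if $P_{i,\pi_C(i,j)}=1$, and otherwise to $v_{\pi_C(i,j)}$, arriving on a known port. Routing needs only this equality bit. By contrast, if the input determined the $A$-side port maps in general, as you suggest, delivering a clique message would need the receiving port at the other endpoint. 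That is an inverse query, which trivializes the search.

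The price is that the hidden object is a fixed-point-free involution, not $n$ independent blocks. A lower bound for all permutations (or all functions) does not transfer to this subclass. The paper proves it separately (Lemma~\ref{lem:involution}) by embedding an arbitrary $\sigma$ as $\sigma'(i)=n+\sigma(i)$, $\sigma'(n+i)=\sigma^{-1}(i)$, with matrix $\begin{pmatrix}0&P\\P^{T}&0\end{pmatrix}$, at one $O_P$-query per $O_{P'}$-query.

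\textbf{The simulation.} You rightly call this the main obstacle, but you give no mechanism. ``Charging a query only when a message traverses an input-dependent edge'' cannot be done coherently, since whether that happens is itself in superposition. Querying every port instead costs $\Theta(n^2)$ queries per round. The missing device is compaction. Because the bound $\mu_r$ holds in every computational-basis configuration, an input-independent unitary can move all non-vacuum port-send registers into $\mu_r$ outbox slots holding (node, port, message). For each slot the simulator:
\begin{enumerate}
\item loads the indices;
\item computes $b=P_{i,\pi_C(i,j)}$ with one query;
\item swaps the message into the edge register toward $w_i$ or $v_{\pi_C(i,j)}$, controlled on $(i,j,b)$;
\item uncomputes $b$ with a second query.
\end{enumerate}
This gives $2\mu_r$ queries per round, followed by an input-independent delivery swap.

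\textbf{Output extraction.} This part is muddled. Wake-up obliges no node to output anything. The $b_i$ versus $b_i'$ device also does not match the descriptor problem, whose bit is the parity of the hidden position itself. The paper has each $w_i$, once awake, send one message back over its unique port. Then $v_i$ learns its hidden port $j$ and outputs $X_i=\pi_C(i,j)$, at a cost of only $n$ extra messages. Since the matching is an involution, one endpoint per matched edge determines $P$, and hence the descriptor.
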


Arguably, the most natural approach for proving the claimed lower bound would be to construct a lower bound graph where many nodes need to find the right port leading to their (unique) sleeping neighbor, and then leverage an existing lower bound for quantum algorithms: In particular, when considering $\Theta\lt( n \rt)$ independent instances of the unstructured search problem in the sequential quantum query model, where the goal is to find the single marked element for each instance, there are known direct product theorems~\cite{klauck2007quantum} that yield a query complexity of $\Omega\lt( n^{3/2} \rt)$.
An obvious choice would be to associate each search instance with the port mapping of a particular node, whereby the ``hidden'' sleeping neighbor corresponds to the marked element. 
However, simulating the quantum routing mechanism of an algorithm designed for the port numbering model in the query model efficiently (i.e., without performing too many queries per simulated message) requires the simulator to have sufficient knowledge about the port assignments of the nodes.

It is straightforward to implement this simulation for \emph{classical} algorithms and obtain a near one-to-one correspondence between message complexity and query complexity. However, a concrete technical challenge that we face when considering quantum algorithms, is that a node $v$ may send a single message in superposition over all incident ports, and the simulator must ensure that the contents of the send registers are routed correctly into the corresponding receive registers.
Note that the simulator needs to achieve this without exceeding $O(1)$ quantum queries to the input instances, since the single message (sent in superposition) contributes only $1$ to the message complexity.  
It may appear that we can circumvent this obstacle by fixing the port assignments for most pairs of adjacent nodes a priori, and only rewiring one port per node by directing it to a sleeping node.
Unfortunately, this approach would break the assumed independence of the port assignments of distinct nodes, and we are not aware of direct product results for quantum search that yield sufficiently strong lower bounds under such dependencies. 

To avoid the above pitfalls, we first define the $\desc$ problem in the standard quantum query model, which, intuitively speaking, requires the algorithm to recover the parity bits of each element of a permutation that is an involution and does not have any fixed points. 
By leveraging the known hardness of recovering the parity bits of an (arbitrary) permutation due to \cite{van2020quantum}, we obtain the same lower bound for the query complexity of $\desc$.
Returning to the distributed setting, we introduce a class of lower bound graphs $\mathcal{H}_n$, where each $G \in \mathcal{H}_n$ is uniquely identified by a certain hidden perfect matching on a clique subgraph, which is unknown to the nodes a priori.
Moreover, finding this matching turns out to be necessary for achieving wake-up. 
This motivates defining a helper problem called $\matching$, which requires at least half of the clique nodes to find their partner in the matching.
The main technical challenge in the proof is to show how we can simulate a given distributed algorithm for $\matching$ in the query model.
In particular, this allows us to solve the $\desc$ problem with a query complexity that is upper-bounded by the message complexity of the assumed distributed algorithm.

\section{Preliminaries} \label{sec:prelim}
\subsection{Distributed Quantum Computing Model} \label{sec:qmodel}

We consider a distributed synchronous network of $n$ nodes with an arbitrary connected communication topology, modeled as an undirected connected graph $G$.
Each node $v$ has a unique identifier (ID) from $[n]$ and, apart from its ID, only knows $n$ and its degree $\deg(v)$ in the network. Nodes communicate with their neighbors via ports.
The assumption that a node does not know the actual endpoints of its ports is commonly called \emph{port numbering model}~\cite{peleg,DBLP:journals/csur/Suomela13} or $\kt_0$~\cite{AGPV88}.

At the start of each round, each awake node $v$ performs some local computation based on its current state and a source of randomness,\footnote{Our algorithms work as long as each node has a private source of random bits, whereas our lower bound holds even when nodes have access to shared randomness.} which, in turn, determines the communication that it may want to send to its neighbors in $G$ in this round.
In the $\congest$ model, all messages are limited to $O(\log n)$ bits (or qubits), while there is no such restriction in the $\local$ model; we refer the reader to \cite{peleg} for more details and justifications for these assumptions.
Even though we consider a synchronous model, we do not assume that nodes have access to a global clock. 

In more detail, our communication is based on the quantum routing communication model of \cite{dufoulon2025quantum,gall2025exponential}.
We now explain how we extend the quantum routing model to adequately capture the behavior of sleeping nodes in a distributed quantum network: 
We assume that every pair of neighbors in $G$ has a standard communication link for sending (classical) bits and, separately, also has a quantum communication channel for sending messages consisting of qubits.
Conceptually, this means that each node $v$ is equipped with two separate hardware \emph{network interface cards} (NICs): $v$'s classical NIC controls $v$'s classical channels and its quantum NIC determines $v$'s quantum communication.
For simplicity, we assume that $v$'s port mapping used by the quantum NIC and the classical NIC are identical.

A node may be either \emph{asleep} or \emph{awake}. 
An adversary determines the subset of initially awake nodes, who start executing the algorithm from round $1$ onward. 
As explained above, all remaining nodes awaken only upon receiving a message over a {classical} communication link from some already awake neighbor. 
That is, if a sleeping node $u$ is the destination of a classical round-$r$ message $m$, then $u$ wakes up at the start of round $r+1$ and has access to the content of $m$.
A crucial distinction between the node itself and its NICs is that the latter are assumed to be ``on'' and in some initial state even if the node itself is still asleep. 
Concretely, a node $v$ has $4\deg(v)$ \emph{communication registers}, consisting of  the quantum emission register $\EReg_{v\to i}$, the quantum receive register $\RReg_{v\gets i}$, as well as the classical emission register $\ECReg_{v\to i}$ and classical receive register $\RCReg_{v \gets i}$, for each $i \in [\deg(v)]$.
To perform the actual communication step during a round, the quantum routing model swaps the contents of registers via standard controlled swap operations. Note that some of these registers may hold the empty message (i.e., $\perp$).

A sleeping node $u$'s quantum NIC is configured to reflect a phase of $-1$ when receiving a quantum message and does not interact with $u$ itself, i.e., $u$ continues to be asleep. 
Awake nodes, on the other hand, reflect a phase of $1$ (i.e., no phase flip) upon receiving a quantum query message. In addition to responding to query messages, awake nodes are free to send any arbitrary message of $O\lt( \log n \rt)$ qubits over each quantum channel, as instructed by the node.
We again emphasize that, while both NICs of a node are always available for receiving messages, a crucial assumption is that a sleeping node only wakes up upon receiving a message over one of its classical communication links.
Any message sent over a quantum channel to a sleeping node $u$ does \emph{not} awaken the recipient and, as mentioned above, simply causes the quantum NIC of $u$ to trigger a $-1$ phase reflection in response.

\paragraph{Message Complexity of Quantum Algorithms.}
We define the message complexity analogously to  \cite{dufoulon2025quantum,gall2025exponential} as follows:
Consider a synchronous distributed quantum algorithm $\mathcal A$ and a set of initially awake nodes $A_*$. At the beginning of any round $r$, the global network state is a joint quantum state over all local registers and communication registers, and may be in superposition over different local configurations and communication patterns. 
We say that $\mathcal A$ has \emph{quantum message complexity at most $\mu_r(A_*)$ in round $r$} if, for every 
possible configuration, the total number of quantum messages transmitted/received during round $r$ is at most $\mu_r$. 
Moreover, we say that the \emph{quantum message complexity of $\mathcal A$}  is at most $\mu$ with probability $p$, if, for any choice of initially awake nodes $A_*$, it holds that $\sum_{r=1}^R \mu_r(A_*) \le \mu$ with probability at least $p$, where $R$ denotes the worst case number of synchronous rounds until $\mathcal{A}$  terminates. %
Finally, we define the \emph{message complexity of algorithm $\mathcal{A}$} as the sum of its classical and quantum message complexity.

\paragraph{Oracle and Advice.}
We assume that an \emph{oracle} observes the entire network topology including all port mappings and the set of initially awake nodes, and then assigns a bit string, called \emph{advice}, of at most $\alpha$ (classical) bits to each node with the goal of reducing the number of sent messages. 
It is not too difficult to solve wake-up with $\alpha\ge\log_2n$ bits of advice per node and optimal (i.e., $O(n)$) message complexity. 
In fact, this holds even without quantum capabilities and even without assuming that the oracle knows the initially awake nodes; see \cite{wakeup,fraigniaud2006oracle}.
Therefore, we are mostly interested in the case where $\alpha \in o(\log n)$, which means that the oracle cannot even assign as little information as a single integer from $[n]$ (e.g., a port number) to a node.

\paragraph{Quantization of Classical Distributed Algorithms.}
Similarly to \cite{dufoulon2025quantum}, we describe parts of our algorithms as classical algorithms. 
This is valid due to their Lemma~3.1, restated below for completeness:

\begin{lemma}[\cite{dufoulon2025quantum}] \label{lem:quantization}
Let $A$ be a randomized or quantum distributed algorithm, possibly with intermediate measurements. Then, there is a quantum distributed procedure $B$, without any intermediate measurement that simulates $A$ with the same round and message complexities.
\end{lemma}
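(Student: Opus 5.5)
The plan is to simulate $A$ in the standard way, deferring all measurements and replacing every classical random choice by a quantum coin. Concretely, each node's private random bits are replaced by local ancilla registers prepared in the uniform superposition $\frac{1}{\sqrt 2}(\ket{0}+\ket{1})$ via Hadamard gates. Every intermediate measurement of a local register is replaced by a CNOT that copies the measured value, in the computational basis, into a fresh local ancilla; this is the principle of deferred measurement. Subsequent operations that were classically conditioned on the outcome become quantum-controlled operations on that ancilla. Classical messages are handled the same way: the classical emission and receive registers $\ECReg,\RCReg$ are simply registers held in computational basis states, or superpositions thereof. The routing step that swaps register contents via controlled swaps is already unitary, so it needs no change.

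The key steps are as follows.

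First, define $B$ round by round. Each node $v$ applies a unitary $U_v^{(r)}$ that coherently implements $A$'s local step at round $r$, acting on its local workspace, its coin ancillas, its measurement-record ancillas, and its emission registers.

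Second, apply the network's routing unitary exactly as in the model.

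Third, argue by induction on rounds that the reduced state of $B$ is the purification of $A$'s state. Tracing out the record and coin ancillas of $B$ at the end of round $r$ yields the mixed state of $A$ at round $r$, averaged over its random and measurement outcomes. This uses the standard fact that controlled-on-ancilla operations after a CNOT-copy are equivalent to classically controlled operations after measurement, since the ancilla is never touched again except as a control. Consequently, the final output distribution of $B$, measured at the end, coincides with that of $A$.

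Fourth, check the complexities. The round count is preserved because each round of $A$ maps to exactly one round of $B$. For messages, note that the model defines message complexity as a worst case over every configuration in the superposition. In each branch (computational-basis record of coins and outcomes), $B$ sends exactly the messages that $A$ sends on that run. The bound $\mu_r$ therefore holds branchwise, so the quantum message complexity is unchanged. Wake-up behavior likewise matches branchwise.

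I expect the main obstacle to be this last point. The concern is that a message being ``sent'' becomes a quantum-controlled event, so $B$ could conceivably count as transmitting on the union of all branches. I would resolve this by appealing to the model's definition: it bounds the number of messages per configuration, not across the superposition, exactly as in \cite{dufoulon2025quantum}.

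A related subtlety concerns classical links that awaken sleeping nodes. A node's awake or asleep status could become entangled with the rest of the state. I would handle this by treating the NIC behavior as part of the fixed routing unitary, which is applied identically in every branch. With that convention, the same counting argument and the same controlled construction go through.
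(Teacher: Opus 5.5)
Your proposal is correct and is essentially the argument behind this lemma: the paper gives no proof of its own but imports it from \cite{dufoulon2025quantum}, where it rests on exactly this deferred-measurement purification (quantum coins, CNOT-copies in place of measurements, computational-basis encoding of classical messages, and message counting per configuration of the superposition). Your remark about handling awake status coherently under the wake-up rule goes slightly beyond the paper, which reuses the lemma in its sleeping-node model without comment.
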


Our algorithm makes use of the following quantum search subroutine from previous work:

\begin{lemma}[see Theorem~4.1 in \cite{dufoulon2025quantum}] \label{lem:grover}
Consider a node $u$, any subset $X \subseteq \set{1,\dots,\deg(u)}$, and let $f : X \to \set{0,1}$ be a function such that $u$ can evaluate $f(x)$ in $O(1)$ rounds by sending $O(1)$ messages, for any given input $x \in X$. 
 Let $C_u = |f^{-1}(1)|$.
 There exists a distributed quantum algorithm $\qsearch(f,X)$ invoked by $u$ such that, with high probability, if $C_u > 0$,
 then $\qsearch(f, X)$ returns an %
 element $x$ from $f^{-1}(1)$, otherwise it returns $\noneawake$. This holds as long as $f$ does not change throughout the invocation. The quantum message complexity and time complexity  are 
$O\lt(\sqrt{\frac{|X|}{\max\{1, C_u\}}}\cdot\log n\rt)$. %

\end{lemma}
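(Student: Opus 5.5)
The plan is to implement the standard Boyer--Brassard--H{\o}yer--Tapp (BBHT) variant of Grover search, which handles an unknown number of marked items, with node $u$ acting as the ``central'' quantum computer. The only difference is that each oracle call $O_f\colon \ket{x}\mapsto(-1)^{f(x)}\ket{x}$ is realized by distributed communication rather than a local query. By Lemma~\ref{lem:quantization}, I may describe the evaluation of $f$ as a classical (or measurement-using) procedure and then quantize it without changing its round or message cost. This yields a measurement-free unitary $U_f\colon \ket{x}\ket{0}\mapsto\ket{x}\ket{f(x)}\ket{g_x}$, where $g_x$ denotes garbage. $U_f$ runs in $O(1)$ rounds and sends $O(1)$ messages on every branch of the superposition. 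Because the message complexity is defined per configuration, a single superposed message over the ports in $X$ counts as one message.

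Concretely, one oracle call proceeds as follows. First, $u$ applies $U_f$ with the index register controlling which emission register $\EReg_{u\to x}$ is loaded. Next, $u$ applies a $Z$ gate to the answer qubit to obtain the phase $(-1)^{f(x)}$. Finally, $u$ runs $U_f^{-1}$, which costs again $O(1)$ rounds and messages, so that the garbage $\ket{g_x}$ is uncomputed and the neighbors' registers return to their initial state. This leaves exactly $O_f$ on $u$'s index register. The hypothesis that $f$ is fixed during the invocation is what guarantees that $U_f^{-1}$ undoes $U_f$. The diffusion operator $2\ket{\psi}\bra{\psi}-I$, with $\ket{\psi}$ the uniform superposition over $X$, is local to $u$ and free.

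Given this oracle, I would run BBHT. The iteration bound $m$ starts at $1$ and grows by a factor $6/5$. In each stage, $u$ picks $j$ uniformly from $[0,m)$, applies $j$ Grover iterations, measures a candidate $x$, and verifies $f(x)=1$ classically with $O(1)$ messages. When $C_u>0$, BBHT finds a marked element after an expected $O(\sqrt{|X|/C_u})$ oracle calls. I would cap each trial at $c\sqrt{|X|/\max\{1,C_u\}}$ calls; since $C_u$ is unknown, I would instead cap the growth of $m$ at $\sqrt{|X|}$ and use the geometric structure of the stage sequence to bound the work. Markov's inequality then gives constant success probability per trial. Running $O(\log n)$ independent trials, and returning the first verified $x$, makes the success probability $1-n^{-c}$. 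If no trial verifies a candidate, $u$ outputs $\noneawake$. This output is always correct when $C_u=0$, because verification never accepts an unmarked element. The total cost is $O(\sqrt{|X|/\max\{1,C_u\}}\log n)$ rounds and messages, since each oracle call is $O(1)$ of each.

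The main obstacle is the distributed oracle step: showing that the superposed routing through the communication registers, followed by uncomputation, leaves no entanglement with neighbors' NICs or with other parts of the network. Any residual entanglement would destroy the interference Grover relies on. I would handle this by keeping all of $f$'s evaluation within the reversible $U_f$, using the fact that the neighbors' responses depend only on their (static) states. A secondary concern is that the $C_u=0$ case must stop at $O(\sqrt{|X|}\log n)$ oracle calls; this is ensured by capping $m$ at $\sqrt{|X|}$.
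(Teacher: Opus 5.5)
\textbf{Gap: the high-probability cost bound for $C_u>0$ is not established.} The paper does not prove this lemma; it imports Theorem~4.1 of \cite{dufoulon2025quantum}, so your proposal has to stand on its own. Your oracle construction (compute, phase, uncompute, with one superposed message counted once) is fine. Your outer loop gives correctness with high probability and a worst-case cost of $O(\sqrt{|X|}\log n)$. What fails is the claim that the cost is $O(\sqrt{|X|/C_u}\log n)$ with high probability when $C_u$ is large.

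BBHT only bounds the \emph{expected} number of oracle calls, and its tail is heavy. Let $\sin^2\theta=C_u/|X|$ and $m_0=1/\sin(2\theta)=\Theta(\sqrt{|X|/C_u})$. Once $m\ge m_0$, the success probability $\frac12-\frac{\sin(4m\theta)}{4m\sin(2\theta)}$ is at most $3/4$, so every stage still fails with probability at least $1/4$. Meanwhile the stage lengths keep growing by the factor $6/5$. Hence, as long as $Km_0$ is below the $\sqrt{|X|}$ cap, the probability of spending more than $Km_0$ calls is at least $K^{-O(1)}$, not $e^{-\Omega(K)}$.

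Running $O(\log n)$ independent trials one after another does not repair this. It boosts the probability of \emph{finding} a marked element, but when $C_u\gg 1$ the first trial succeeds before its $\Theta(\sqrt{|X|})$ budget runs out with probability $1-O(C_u^{-1/2})$. The total cost is then just that trial's cost, with the heavy tail. The per-trial budget $c\sqrt{|X|/C_u}$ that would make your Markov argument work is unavailable because $C_u$ is unknown. Capping $m$ at $\sqrt{|X|}$ only yields the $O(\sqrt{|X|})$ per-trial bound you are trying to beat. For instance, with $|X|=n$ and $C_u=\sqrt n$, your scheme exceeds $Cn^{1/4}\log n$ calls with probability at least $(\log n)^{-O(1)}$, far from $n^{-c}$. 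This matters here: Lemma~\ref{lem:grover_iter} sums the per-call bounds $\sqrt{|X|/(C_u-i+1)}\log n$ under a union bound, and that sum breaks if each call is only guaranteed $O(\sqrt{|X|}\log n)$.

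\textbf{Fix: repeat inside the schedule, not outside it.}
\begin{itemize}
\item For $k=0,1,\dots,\lceil\log_2\sqrt{|X|}\rceil$, make $\Theta(\log n)$ independent attempts. Each attempt picks $j$ uniformly from $[0,2^k)$, applies $j$ Grover iterations, measures, and verifies.
\item Stop at the first verified element, and output \textsf{NULL} after the last level.
\item If $0<C_u\le 3|X|/4$, then $1/\sin(2\theta)\le\sqrt{|X|/C_u}$. At the first level $k^*$ with $2^{k^*}\ge\sqrt{|X|/C_u}$, each attempt succeeds with probability at least $1/4$. That level fails entirely with probability at most $(3/4)^{\Theta(\log n)}=n^{-c}$. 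All levels up to $k^*$ cost $O(\sum_{k\le k^*}2^k\log n)=O(\sqrt{|X|/C_u}\log n)$.
\item If $C_u>3|X|/4$, level $0$ is plain random sampling and succeeds with high probability at cost $O(\log n)$.
\item If $C_u=0$, all levels run at total cost $O(\sqrt{|X|}\log n)$, and \textsf{NULL} is correct because verification never accepts.
\end{itemize}
Alternatively, run your $\Theta(\log n)$ BBHT trials in lockstep and stop as soon as any one verifies. Each independent trial finishes within twice its expected length with probability at least $1/2$. So the minimum length is $O(\sqrt{|X|/C_u})$ except with probability $n^{-c}$, giving total cost $O(\sqrt{|X|/C_u}\log n)$.

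Finally, a trial must \emph{terminate} once the schedule reaches $\sqrt{|X|}$. Merely freezing $m$ there would never halt when $C_u=0$.
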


\newcommand{\Bcast}{B_0}
\section{The Upper Bound} \label{sec:q_adv}
Our wake-up algorithm consists of two parts:
In Section~\ref{sec:q_adv_scheme_desc}, we describe how the oracle computes the individual advice assigned to the nodes.
The advice strings are used by the distributed algorithm in Section~\ref{sec:q_adv_algo}.

\subsection{Computing the Advice} \label{sec:q_adv_scheme_desc}
Our advising scheme assumes that the oracle is given the set of initially awake nodes, and that each node obtains at most $\alpha\ge 0$ (classical) bits of advice per node. Throughout this subsection, we assume that $\alpha \ge 3$. When describing the algorithm in Section~\ref{sec:q_adv_algo}, we also explain how to handle the special case that $\alpha\le2$. Note that the oracle simply does not assign any advice in the latter case.
 
The advice string will consist of two equal-sized parts and a single bit, which motivates defining parameter 
\begin{align}
\beta = \max\Set{\lt\lfloor \frac{\alpha-1}{2} \rt\rfloor,0}, \label{eq:q_beta} 
\end{align}

Recall that the node IDs are a permutation of $[n]$. 
The algorithm that uses the advice performs $\arad$ epochs and, in each epoch, processes the nodes in increasing order of their IDs in a sequential greedy-like fashion. 
That is, there are $n$ phases in an epoch, where in the $j$-th phase, only the node $v$ with ID $j$ can initiate some computation, as described below, and, in that case, we call $v$ the \emph{actor of phase $j$}. 

We now describe how the oracle determines the advice string for each node.
Note that the distributed quantum algorithm using this advice (see Sec.~\ref{sec:q_adv_algo}) makes use of Lemma~\ref{lem:grover_iter} and hence necessarily is randomized; consequently, the oracle cannot deterministically predict the nodes’ states throughout the algorithm’s execution. 

Let $N_v$ denote the set of all neighbors of $v$. 
The oracle defines several sets inductively over the epochs $i=1, \ldots, \arad$ as follows: 
\begin{itemize} 
\item 
$A_0$ is the empty set and 
$A_1$ is the set of initially awake nodes. 
\item $S_i = \bigcup_{v \in A_i} N_v \cap \lt(V(G)\setminus \lt(A_i\cup A_{i-1}\rt)\rt)$.
\item For $i\ge1$, we define $A_{i+1} = S_i$. 
\item 
For $i \ge 1$, let $v_1^{(i)}, \ldots, v_{|A_i|}^{(i)}$ be the elements in $A_i$, ordered in increasing order of ID. We define
\begin{align}
    S_{v_j}^{(i)} &= N_{v_j^{(i)}} \cap \lt( S_i  \setminus \bigcup_{k \in [j-1]}S_{v_k}^{(i)} \rt). \label{eq:qsi}  
\end{align}
\end{itemize}
Intuitively, $A_i$ is the set of newly awake nodes (that is, the nodes in $A_i$ that were asleep at epoch $i-1$). We call $A_i$ the \emph{actors of epoch $i$}. 
Our advising scheme assigns advice to the actors of epoch $i$ so that, collectively, they can wake up all their neighbors that are still asleep at the start of epoch $i$. Assuming that the algorithm succeeded so far, the set $S_{v_j}^{(i)}$ contains exactly the set of sleeping neighbors of $v_j$ when it becomes actor. %
In particular, this implies that a node in $A_i$ can only have neighbors in $A_{i-1} \cup A_i \cup A_{i+1}$.
We state some immediate properties of these sets:

\begin{lemma}\label{lem:epoch}
Consider any node $v$. The following hold:
\begin{enumerate}
    \item $v$ is an actor of at most one epoch. That is, there is a unique $i$ such that $v$ is in $A_i$.
    \item $S_v^{(i)} \cap S_{w}^{(j)} = \emptyset$ for any nodes $w \neq v$, where $i$ respectively $j$ are their (unique) epochs in which they are actor.
\end{enumerate}
\end{lemma}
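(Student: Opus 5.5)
Both parts follow directly from the inductive definition of the sets, so I will argue by induction on the epoch index, first proving that the sets $A_1, A_2, \ldots$ are pairwise disjoint. I will prove jointly by induction on $i\ge1$ that $A_1,\dots,A_i$ are pairwise disjoint and that $A_i \cap \bigcup_{k<i-1} A_k = \emptyset$ follows from a stronger invariant: every node of $A_{i+1}$ has distance exactly $i$ from $A_1$, i.e., $A_{i}$ is the $(i-1)$-th BFS layer around the initially awake set. For $i=1$ this is trivial. For the step, $A_{i+1}=S_i$ consists of neighbors of $A_i$ (distance $\le i$ from $A_1$) lying outside $A_i\cup A_{i-1}$; a neighbor of a layer-$(i-1)$ node lies in layer $i-2$, $i-1$, or $i$, and excluding $A_{i-1}\cup A_i$, which by the hypothesis are exactly layers $i-2$ and $i-1$, leaves precisely the nodes of layer $i$ adjacent to $A_i$. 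Since every layer-$i$ node has a neighbor in layer $i-1$, this is all of layer $i$. Consequently the $A_i$ are distinct BFS layers, hence pairwise disjoint, which gives the uniqueness in part~1. Existence of such an $i$ for every node holds because $G$ is connected and $\arad$ is the maximum layer index; the statement only asks for "at most one", so I would mention existence just as a remark.

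For part~2, I would split into two cases. If $i=j$, suppose $v=v_a^{(i)}$ and $w=v_b^{(i)}$ with $a<b$ without loss of generality. By \eqref{eq:qsi}, $S_{w}^{(i)}$ is intersected with the complement of $\bigcup_{k<b}S_{v_k}^{(i)}$, which contains $S_v^{(i)}$, so the two sets are disjoint. If $i\neq j$, then $S_v^{(i)}\subseteq S_i=A_{i+1}$ and $S_w^{(j)}\subseteq S_j=A_{j+1}$, and these are disjoint by part~1 since $i+1\ne j+1$.

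The only real obstacle is the layer invariant in part~1. The definition of $S_i$ excludes only $A_i\cup A_{i-1}$, not all earlier $A_k$, so I must justify that a neighbor of $A_i$ cannot lie in some $A_k$ with $k\le i-2$. The BFS-layer characterization handles this, because edges join only nodes whose layers differ by at most one. I would state this characterization explicitly as a claim inside the proof before deriving both parts.
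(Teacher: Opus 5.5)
Your proof is correct. The paper gives no proof of this lemma; it only calls both items ``immediate properties'' of the inductive definitions. So your argument supplies a justification the paper omits rather than taking a competing route.

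The point you flag is exactly the one the paper glosses over. $S_i$ removes only $A_i\cup A_{i-1}$, so disjointness of $A_{i+1}$ from $A_1,\dots,A_{i-2}$ is not literally immediate. Identifying $A_{k+1}$ with the BFS layer $L_k=\{u : \mathrm{dist}(u,A_1)=k\}$ is a clean way to get it. It also yields the paper's remark that a node of $A_i$ has neighbors only in $A_{i-1}\cup A_i\cup A_{i+1}$.

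Two small presentation fixes:
\begin{itemize}
\item State the invariant as the equality $A_{k+1}=L_k$ for all $k\le i$. Your opening sentence partly phrases it as a containment (``every node of $A_{i+1}$ has distance exactly $i$''). The inductive step actually needs $L_{i-2}\cup L_{i-1}\subseteq A_{i-1}\cup A_i$, which is the equality you in fact use when you say these sets ``are exactly layers $i-2$ and $i-1$''.
\item The statement does assert existence (``there is a unique $i$''), so existence is part of the claim, not just a remark. Your connectivity argument covers it, provided at least one node is initially awake.
\end{itemize}

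Part~2 is fine as written in both cases.
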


The advice of $v$ consists of a triple $(g_v,\Lambda_v,\Pi_v)$, where $g_v$ is just a single bit, while $\Lambda_v$ and $\Pi_v$ have a length of at most $\beta$ bits each.
According to the definition of $A_i$ above, each node $v$ is an actor in at most one epoch $i$. 
The bit string $\Lambda_v$ will enable $v$ to narrow down the search range for finding at least one neighbor that is asleep at the start of epoch $i$. 
On the other hand, $\Pi_v$ consists of advice that will allow some already-awake neighbor $w$ of $v$ to find another sleeping node, which motivates calling $\Pi_v$ the \emph{proxy advice carried by $v$}.

\paragraph{Advice Tree and Associated Port Ranges:}
For each $v$, we conceptually define an advice tree $\mathcal{T}_v$ that is used by the oracle for computing $\Lambda_v$ and $\Pi_v$. 
Let $n_v$ be the degree of $v$ in the network, and let $n_v'$ be the smallest power of $2$ such that $n_v \le n_v'$.
We make use of a directed complete binary tree $\mathcal{T}_v$ of depth exactly $\log_2n_v'$, called \emph{advice tree}, whose edges are directed from parents towards their children, i.e., there is exactly one directed path from the root to each leaf.
We attempt to label the leaves of $\mathcal{T}_v$ by the integers in $[n_v]$ in strictly increasing order. Such a labeling is possible if $n_v=n_v'$; in the case that $n_v<n_v'$, we label all of the remaining $n_v'-n_v$ right-most leaves with $n_v$.
We leverage the natural one-to-one correspondence between bit strings of length at most $\lt\lceil \log_2 n_v \rt\rceil \le \log_2 n_v'$ and directed paths starting from the root:
That is, the $k$-th bit of the given bit string determines whether we follow the left or right branch when reaching a level-$k$ vertex of $\mathcal{T}_v$, with the root being on level $1$.
Thus, for a given $\ell$-bit string $b$, we define the \emph{port range associated with $b$} as the interval $[p_1,p_2] \subseteq [n_v]$, such that $p_1$ and $p_2$ are the smallest, respectively largest, label of any leaf reachable from the endpoint of the length-$\ell$ directed path that corresponds to $b$. 
Clearly, for any neighbor $w$ of $v$ and any $k \le \log_2 n_v'$, there exists a $k$-length directed path from the root of $\mathcal{T}_v$ whose corresponding bit string has an associated port range containing $v$'s port to $w$, denoted by $p_{v,w}$.
We call the associated port range the \emph{level-$\beta$ port range of $w$ with respect to $v$} and denote it as $P_v^\beta(w)$. 
Note that if $p_{v,w}=n_v$ and $n_v<n_v'$, there may be more than one such port range (since multiple leaves are labeled with $n_v$), and, in that case, we simply pick the largest port range to avoid ambiguity.

\paragraph{The Oracle's Algorithm:}
The oracle uses $\mathcal{T}_v$ to compute the advice associated with $v$ as follows:
Let $i$ be the epoch where $v$ is an actor, i.e., $v \in A_i$. 
\begin{itemize}
\item We set the bit $g_v=1$ if and only if $|S_v^{(i)}| \ge 2^{\beta}$.  
\item 
If $g_v=1$, then $\Lambda_v$ is the empty string. 
Moreover, $\Lambda_v$ is also empty if $g_v=0$ and $S_v^{(i)}=\emptyset$. Otherwise, the oracle determines $\Lambda_v$ such that it is associated with the level-$\beta$ port range $P_v^\beta(w_1)$, where $w_1$ denotes the node with the smallest ID in $S_v^{(i)}$.

\item 
Next, the oracle inductively defines a sequence of nodes $w_1, \ldots, w_k \in S_v^{(i)}$ such that each of these nodes will carry proxy advice that enables $v$ to find all nodes in $S_v^{(i)}$.  
   \begin{itemize} 
   \item We have already defined $w_1$ and $P_v^\beta(w_1)$ in the previous bullet point; we also define the set $S_1'=S_v^{(i)}$.

    \item For $j\ge 2$, we obtain $S_j'$ by removing all nodes from $S_{j-1}'$ that $v$ is connected to via a port in $P_v^\beta(w_{j-1})$, which includes $w_{j-1}$ itself at the very least.
    Let $w_j$ be the node with the smallest ID in $S_j'$. 
    We determine the proxy advice $\Pi_{w_{j-1}}$ assigned to $w_{j-1}$ by setting $\Pi_{w_{j-1}}$ such that the associated level-$\beta$ port range is $P_v^\beta(w_{j})$; we simply leave $\Pi_{w_j}$ empty if $S_j'=\emptyset$.
    \end{itemize}
\end{itemize}
This concludes assigning the advice with respect to $v$.

\subsection{Toolbox: Iterated Quantum Search} \label{sec:iterated_grover}

In our distributed algorithm described in Section~\ref{sec:q_adv_algo} below, we frequently call the $\qsearch$ procedure (see Lemma~\ref{lem:grover} in Section~\ref{sec:prelim}) to exhaustively find all marked elements within a certain search range. 
The following lemma summarizes the complexity bounds for these iterative calls:

\begin{lemma}\label{lem:grover_iter}
Based on the procedure $\qsearch$ stated in Lemma~\ref{lem:grover}, 
there exists a distributed algorithm called $\iterqsearch$, invoked by $u$, that
returns \textit{all} elements $x \in X$ with $f(x)=1$ with high probability.
Its round and (quantum) message complexity are
$O(\sqrt{|X|  \max\set{C_u,1}}\cdot \log n)$. %
\end{lemma}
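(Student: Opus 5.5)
The approach is to repeatedly invoke $\qsearch$ from Lemma~\ref{lem:grover}, removing each element found from the search space, until it returns $\noneawake$. Concretely, $u$ keeps a set $F$ of found elements, initially empty. In iteration $t$ it calls $\qsearch(f_t, X)$, where $f_t(x)=f(x)$ for $x\notin F$ and $f_t(x)=0$ for $x\in F$. Evaluating $f_t$ costs no more than evaluating $f$: $u$ knows $F$ locally and simply skips the query when $x\in F$. So $f_t$ satisfies the hypotheses of Lemma~\ref{lem:grover}, and $f_t$ is fixed throughout the $t$-th invocation. If the call returns some $x$, we add $x$ to $F$ and continue; otherwise we stop and output $F$.

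\textbf{Correctness.} At the start of iteration $t$ (for $t\le C_u$), the number of marked elements of $f_t$ is $C_u-(t-1)$, provided every previous call succeeded. Each call errs with probability at most $n^{-c}$ for a suitable constant $c$. Since $C_u\le|X|\le n$, a union bound over at most $C_u+1\le n+1$ calls shows that all calls succeed with high probability, after boosting the constant in the w.h.p.\ guarantee of Lemma~\ref{lem:grover} if necessary. Given that all calls succeed, the procedure returns exactly $f^{-1}(1)$. It also terminates after exactly $C_u+1$ calls, because the last call sees no marked element and returns $\noneawake$.

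\textbf{Complexity.} Writing $C=C_u$, Lemma~\ref{lem:grover} bounds the total cost by
\[
O\Big(\log n\cdot\Big(\sqrt{|X|}+\sum_{k=1}^{C}\sqrt{|X|/k}\Big)\Big).
\]
Using $\sum_{k=1}^{C}k^{-1/2}\le 2\sqrt{C}$, this is $O(\sqrt{|X|\max\{C,1\}}\log n)$, and the same bound holds for both the round and message complexity.

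\textbf{Main obstacle.} The delicate step is the high-probability accounting. Each call's guarantee is conditional on the preceding calls having succeeded. Moreover, a failed call could return $\noneawake$ prematurely or, depending on how Lemma~\ref{lem:grover}'s failure mode is interpreted, an unmarked element. I would handle this by verifying each returned $x$ with one extra evaluation of $f$, which costs $O(1)$ messages, so that a wrong answer is never added to $F$. The failure event then reduces to some call missing an existing marked element, which the union bound covers. One further subtlety is that the $\log n$ factor in Lemma~\ref{lem:grover} is what provides the per-call error $n^{-c}$. I would state explicitly that $c$ is chosen large enough that the union bound over at most $n+1$ calls still leaves a high-probability guarantee.
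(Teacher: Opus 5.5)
Your proposal is correct and follows essentially the same argument as the paper: call $\qsearch$ repeatedly with already-found elements excluded, sum $\sum_{k=1}^{C_u}\sqrt{|X|/k}=O(\sqrt{|X|C_u})$ plus one final $\noneawake$ call, and apply a union bound over at most $n+1$ calls with per-call failure probability $n^{-(c+2)}$. The paper removes found elements from $X$ where you mask $f$, which gives the same bound; your extra verification of each returned element is a harmless safeguard that the paper leaves implicit.
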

\begin{proof}
The algorithm proceeds in iterations.
In each iteration, $u$ invokes $\qsearch(f,X)$ as given by Lemma~\ref{lem:grover}.
In the case $C_u=0$, this returns $\emptyset$ (w.h.p.) and takes $O(\sqrt{|X|} \cdot  \log n)$.

Thus, we focus on the case $C_u>0$:
If $\qsearch$ returns some $x\in X$ with $f(x)=1$ in some iteration, then $u$ locally removes $x$ from $X$ and the 
next iteration starts. 
If $\qsearch$ returns \noneawake, we simply terminate.

After the $i$-th successful iteration, there are $C_u - i + 1$ remaining 
marked elements. By Lemma~\ref{lem:grover}, the round and message 
complexity of that call to $\qsearch$ is
$O (\sqrt{\frac{|X|}{C_u - i + 1}}\log n).$
Summing over all $C_u$ successful calls, the total round- as well as message 
complexity is $$O\lt(\log n\sum_{i=1}^{C_u}\sqrt{\frac{|X|}{C_u-i+1}}\rt) =O\lt(\log n \sqrt{|X|C_u}\rt).$$
The final call that returns $\noneawake$ adds at most 
$O\lt(\sqrt{|X|}\log n\rt)$, which does not change the asymptotic bound.

We choose the failure probability of $\qsearch$ to be at most
$1/n^{c+2}$ for some constant $c>0$. 
As there are at most $|X|+1\le n+1$ calls in total, a union bound ensures that the overall failure probability is at most $1/n^c$. 
\end{proof}

\subsection{The Distributed Wake-up Algorithm} \label{sec:q_adv_algo}
We now describe how the nodes use the provided advice to solve the wake-up problem.

In each epoch $i$, there are $n$ phases, and only the node with ID $l$ can be an actor in phase $l$ of some epoch. 
Since each initially awake node knows its own ID and can keep track of the  number of passed rounds since it woke up in a local clock, it also knows in which phase it can become the actor.
Moreover, by sending along the node's local clock value in every wake-up message that it may send, we can ensure that every awake node has the same notion of time. 
Note that if a node $v$ with ID $l$ wakes up only after the start of an epoch $i$ (i.e., due to receiving a message), then it assumes to have missed its turn and does not become an actor in the current epoch, even if phase $l$ has not yet started; however, it will become the actor in phase $l$ of the next epoch. 
That is, node $v$ has become an actor in phase $l$ of epoch $i+1$, and it no longer becomes actor in any subsequent epoch.
All initially awake nodes are actors in epoch $1$. 

Consider the start of phase $l$ and suppose that node $v$ is an actor and has ID $l$. We proceed as follows: %
\begin{itemize} 
\item If $g_v=1$ or the node did not receive any advice at all (i.e., $\alpha \le 2$), then $v$ simply performs $\iterqsearch$ on its entire port range (i.e., $[\deg(v)]$) to wake up all neighbors that are asleep (see Lemma~\ref{lem:grover_iter}). 

\item If $g_v=0$ and $\Lambda_v$ is empty, then $v$ skips its actor phase. Note that this captures the case when $v$ does not have any sleeping neighbors at the point when it becomes actor.

\item On the other hand, if $g_v=0$, then $v$ uses the advice in $\Lambda_v$ as well as the proxy advice deposited at its neighbors to narrow down the search range for its asleep neighbors:
In more detail, $v$ first runs $\iterqsearch$ (see Lemma~\ref{lem:grover_iter}) on the port range $[a,b]$ associated with $\Lambda_v$ to wake up every sleeping node $s_1,\dots,s_\ell$ that $v$ is connected to via a port in $[a,b]$.
According to the advising scheme (see Section~\ref{sec:q_adv_scheme_desc}), one of these sleeping neighbors holds proxy information that it sends to $v$ in response to receiving the (classical) wake-up message.
Once $v$ has woken up everyone in $[a,b]$, it uses the received proxy information to identify the next port range $[a',b'] \subseteq [n_v]$ in the advice tree $\mathcal{T}_v$, which is guaranteed to be disjoint from $[a,b]$ (see Lemma~\ref{lem:p_range}). 
Then, $v$ again calls $\iterqsearch$ to find every sleeping neighbor, in sequence, which may again reveal a different port range, and so forth. 
We show that this wakes up all neighbors of $v$.
As mentioned above, $v$ does not participate as actor in any future epoch.
\end{itemize}
To ensure synchronicity of the nodes' local clocks, each phase lasts exactly $\tau = C {n} \log n$ rounds for a sufficiently large constant $C$ that depends on the asymptotic running time of $\iterqsearch$. 
 
\subsection{Analysis} \label{sec:q_adv_analysis}

\begin{lemma} \label{lem:p_range}
Consider any node $v$.
Every level-$\beta$ port range associated with $\Lambda_v$ or $\Pi_v$ has a size of at most $\frac{n}{2^{\beta-1}}$, and all of these ranges are pairwise disjoint.
\end{lemma}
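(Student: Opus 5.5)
The proof has two parts: a size bound that follows from the geometry of the advice tree $\mathcal{T}_v$, and a disjointness claim that follows from how the oracle builds the sequence $w_1,\dots,w_k$. We read the statement as referring to the ranges $v$ uses as actor: the range given by $\Lambda_v$ and the ranges given by the proxy advice $\Pi_{w_1},\dots,\Pi_{w_{k-1}}$ carried by its sleeping neighbors. All of these are level-$\beta$ port ranges in the same tree $\mathcal{T}_v$.

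\emph{Size.} A bit string of length $\beta$ selects a vertex at depth $\beta$ below the root. Its subtree has $n_v'/2^{\beta}$ leaves. The associated port range is the set of leaf labels $[p_1,p_2]$, so it has at most $n_v'/2^{\beta}$ elements. Since $n_v'$ is the smallest power of two with $n_v\le n_v'$, we have $n_v' < 2n_v \le 2n$. Hence each range has size at most $2n/2^{\beta} = n/2^{\beta-1}$.

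One point needs care. The tree's levels start at $1$ at the root, so we must confirm that a $\beta$-bit string ends at a vertex whose subtree has $n_v'/2^{\beta}$ leaves. If there is an off-by-one here, we only lose a factor of $2$. That is exactly the slack already built into the $n/2^{\beta-1}$ bound (as opposed to $n/2^{\beta}$), so the claimed bound survives either way. The repeated label $n_v$ on the padding leaves can only shrink a range, so it does not affect the bound.

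\emph{Disjointness.} Two level-$\beta$ port ranges come from two vertices at the same depth of $\mathcal{T}_v$. Their subtrees are either identical or disjoint. Because the leaf labels are non-decreasing from left to right, disjoint subtrees give label intervals that are disjoint, except possibly for the shared label $n_v$ on the padding leaves. This is the step I expect to be the main obstacle, and the paper's convention handles it: when $p_{v,w}=n_v$, we always pick the largest range containing it. So any range containing port $n_v$ is one fixed interval, and the only remaining task is to show the same range never appears twice.

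For that, argue by induction along the sequence $w_1,\dots,w_k$. By construction, $S_j'$ is obtained from $S_{j-1}'$ by deleting every node reached through a port in $P_v^\beta(w_{j-1})$. Therefore
\[
S_j' \cap \bigcup_{t<j}\{u : p_{v,u}\in P_v^\beta(w_t)\} = \emptyset .
\]
Since $w_j\in S_j'$, its port $p_{v,w_j}$ lies outside every earlier range $P_v^\beta(w_t)$ with $t<j$. So $P_v^\beta(w_j)$ is a different depth-$\beta$ range from all earlier ones, and by the tree-structure observation it is disjoint from each of them.

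Finally, the range associated with $\Lambda_v$ is $P_v^\beta(w_1)$, and the range associated with $\Pi_{w_{j-1}}$ is $P_v^\beta(w_j)$. So the family of ranges in question is exactly $\{P_v^\beta(w_j)\}_{j\le k}$, which is pairwise disjoint and consists of sets of size at most $n/2^{\beta-1}$.
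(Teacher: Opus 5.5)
Your proof is correct. The size bound is argued exactly as in the paper: count the $n_v'/2^{\beta}$ leaves below the endpoint of the $\beta$-bit path, then use $n_v'<2n_v\le 2n$. You do skip one case. The paper writes the depth as $\min\{\beta,\log_2\deg(v)\}$ to cover $\beta>\log_2 n_v'$, and you do not handle this. It is harmless: the path then ends at a leaf, so the range is a single port, and $1\le 2n/2^{\beta}$ because $\beta\le\alpha\le\log_2 n$.

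For disjointness you do more than the paper, which only says it ``follows immediately from the description of the advice tree.'' As you observe, the tree alone only makes two depth-$\beta$ ranges equal or disjoint. Even that needs the largest-range convention for the repeated padding label $n_v$. What rules out the same range being used twice is the oracle's deletion step, which forces $p_{v,w_j}\notin P_v^\beta(w_t)$ for $t<j$. Your induction supplies exactly this missing step.

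One remark in your size argument is wrong, though it is not needed. You say an off-by-one in the depth would be absorbed by the slack in $n/2^{\beta-1}$ versus $n/2^{\beta}$. It would not: that factor of $2$ is already spent on $n_v'<2n_v\le 2n$. If the $\beta$-bit path ended at depth $\beta-1$, a range could contain up to $n_v'/2^{\beta-1}$ distinct labels, and this can exceed $n/2^{\beta-1}$. For example, take $\deg(v)=2^m+1=n-1$ and $\beta\ge 2$; then $n_v'=2n-4$, and the leftmost depth-$(\beta-1)$ subtree has $(2n-4)/2^{\beta-1}$ distinct labels. So the claim ``the bound survives either way'' is false.

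The fix is to check the convention directly rather than appeal to slack. The $k$-th bit selects the branch at the level-$k$ vertex, and the root is on level $1$. So a $\beta$-bit string leads to a vertex on level $\beta+1$, which is depth $\beta$, and its subtree has exactly $n_v'/2^{\beta}$ leaves. The paper's proof writes ``level $\beta$'' but counts leaves as for depth $\beta$, so your worry about the indexing was reasonable.
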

\begin{proof}
Disjointness follows immediately from the description of the advice tree.
Since every nonempty string $\Lambda_v$ and $\Pi_v$ has a length of $\beta$ bits, the corresponding path in $\mathcal{T}_v$ reaches a vertex $x$ on level $\min\set{\beta,\log_2\deg(v)}$.
Recall from Section~\ref{sec:q_adv_scheme_desc} that $\mathcal{T}_v$ has exactly $n_v'$ leaves, where $n_v'$ is the smallest power of $2$ that is at least as large as $\deg(v)$.
By an easy counting argument, it follows that the number of leaves with distinct labels that are reachable from $x$ is at most $\frac{n_v'}{2^{\min\set{\beta,\log_2\deg(v)}}} \le \frac{2n}{2^{\beta}}$.
\end{proof}

\begin{lemma} \label{lem:q_adv_one_phase}
Suppose that node $v$ is an actor of epoch $i$. 
Then, with high probability,  $S_v^{(i)}$ is the set of sleeping neighbors of $v$ at the beginning of phase $\id(v)$ of epoch $i$ and all neighbors of $v$ are woken up by the end of phase $\id(v)$ of epoch $i$. 
\end{lemma}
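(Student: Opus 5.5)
We prove the lemma by strong induction over the global order of actor phases: epoch by epoch, and within an epoch in increasing order of ID. The inductive hypothesis says that every actor preceding $v$ in this order succeeded (w.h.p.). That is, each earlier actor $u$ of epoch $i'$ woke up exactly the nodes in $S_u^{(i')}$ during its phase, and no other node was woken.

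\textbf{Step 1: the set of sleeping neighbors.} We first show that, under the hypothesis, the set of awake nodes at the start of phase $\id(v)$ of epoch $i$ is exactly
\[
\bigcup_{i'\le i} A_{i'} \;\cup\; \bigcup_{k<j} S_{v_k}^{(i)},
\]
where $v=v_j^{(i)}$. Here we use $A_{i'+1}=S_{i'}=\bigcup_k S_{v_k}^{(i')}$, which holds because the sets $S_{v_k}^{(i')}$ partition $S_{i'}$ by \eqref{eq:qsi}.

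We also need that $v$ participates as actor in epoch $i$ and never earlier. This follows from the wake-up timing rule: a node woken during epoch $i-1$ waits until epoch $i$.

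Since $v\in A_i$, every neighbor of $v$ lies in $A_{i-1}\cup A_i\cup A_{i+1}$. Neighbors in $A_{i-1}\cup A_i$ are awake. Neighbors in $A_{i+1}=S_i$ are awake exactly when they belong to $S_{v_k}^{(i)}$ for some $k<j$. Hence the sleeping neighbors of $v$ are $N_v\cap\bigl(S_i\setminus\bigcup_{k<j}S_{v_k}^{(i)}\bigr)=S_v^{(i)}$, which is the first claim. One subtlety must be checked here: neighbors of $v$ lying in $A_{i-1}$ or in some $A_{i'}$ with $i'<i-1$. Graph distance rules out the latter, since $A_{i'}$ consists of the nodes at distance exactly $i'-1$ from the initially awake set.

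\textbf{Step 2: $v$ wakes exactly $S_v^{(i)}$ in its phase.} We case-split on the advice.

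\begin{itemize}
\item \emph{Case $g_v=1$, or no advice.} $v$ runs $\iterqsearch$ over $[\deg(v)]$ with $f$ marking sleeping ports. The function $f$ is fixed during the call, because only $v$ acts in this phase and each newly woken node's answer flips only after $v$ removes it. Lemma~\ref{lem:grover_iter} then finds all of $S_v^{(i)}$ w.h.p.

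\item \emph{Case $g_v=0$ and $\Lambda_v$ empty.} Then $S_v^{(i)}=\emptyset$, and there is nothing to do.

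\item \emph{Case $g_v=0$ and $\Lambda_v$ nonempty.} We argue by induction along the oracle's chain $w_1,\dots,w_k$. The range given by $\Lambda_v$ is $P_v^\beta(w_1)$, which contains $w_1$. Iterated search over it wakes every sleeping neighbor in that range, in particular $w_1$, which replies with $\Pi_{w_1}$, encoding $P_v^\beta(w_2)$, and so on. By construction, $S_j'$ is exactly the set of sleeping neighbors remaining after the ranges of $w_1,\dots,w_{j-1}$ have been cleared. Therefore, once the chain ends with $S_{k+1}'=\emptyset$, all of $S_v^{(i)}$ is awake.
\end{itemize}

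\textbf{Step 3: proxy advice and timing.} Two further points need checking.

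First, the proxy advice $\Pi_{w_j}$ must be unambiguous. Each $w$ carries proxy advice for at most one actor, because $w$ lies in a unique $S_v^{(i)}$ by Lemma~\ref{lem:epoch}. Moreover, $w$ is woken by that actor.

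Second, the phase length must suffice. By Lemma~\ref{lem:p_range}, the ranges searched are disjoint, with sizes $|X_j|$ summing to at most $\deg(v)\le n$. Each call costs $O(\sqrt{|X_j|\,C_j}\log n)\le O(n\log n)$, and since $\sum_j |X_j| \le n$ the total number of rounds is at most $O(n\log n)$. This fits in $\tau=Cn\log n$ rounds for a suitable constant $C$.

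\textbf{Conclusion.} Each phase succeeds with probability at least $1-n^{-c}$. A union bound over at most $n$ actors preserves high probability overall.

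The main obstacle is Step 1: carefully establishing the invariant that exactly the predicted sets are awake, in particular the timing of newly woken nodes and the epoch at which they act.
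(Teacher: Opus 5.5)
Your proposal is correct and follows the same route as the paper's proof: a case split on $g_v$, with the $g_v=0$ case handled by following the chain of proxy advice through pairwise disjoint level-$\beta$ port ranges until no further range is revealed. It is more complete than the paper's argument, which takes two things for granted that you prove: the invariant that $S_v^{(i)}$ is exactly the set of sleeping neighbors (your induction over actor order), and that the phase length $\tau$ suffices (your Step 3).
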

\begin{proof}
In the case that $v$ was given the advice $g_v=1$, the claim is immediate from the correctness of $\iterqsearch$, which $v$ iteratively invokes until it has found all sleeping neighbors. 

Now assume that $g_v=0$.  According to the description of the algorithm, $v$ uses the advice string $\Lambda_v$ to determine a port range $[p_1,p_2]$, and then invokes $\iterqsearch([p_1,p_2],f)$ to find all sleeping neighbors $w$.
As a result of awakening $w$, node $v$ may receive proxy advice $\Pi_w$ from at least one neighbor $w$ (unless all of its sleeping neighbors are already in $[p_1,p_2]$), which, by the description of the advising scheme, refers to another port range $[p_1',p_2']$ that, according to Lemma~\ref{lem:p_range}, is disjoint from $[p_1,p_2]$.
Again, $v$ invokes $\iterqsearch$ to wake up at least one sleeping neighbor in $[p_1',p_2']$ and so forth.
\end{proof}

By Lemma~\ref{lem:epoch}, each node is an actor in exactly one phase of exactly one epoch.
Hence, we can determine the message complexity bound by analyzing the messages sent during one particular phase.
In the following, we separately analyze the message complexity for the nodes with advice $g_v=0$ and the ones having $g_v=1$. 

\begin{lemma} \label{lem:q_adv_msgs_g0}
Let $B_0$ be the set of nodes such that for all $v \in B_0$, we have $g_v=0$. Then the message complexity during all phases in which such a $v$ is actor is $O\lt( \sqrt{\frac{n^3}{2^\beta}}\log n\rt)$ with high probability. 
\end{lemma}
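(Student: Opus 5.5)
We bound the cost of a single actor $v\in B_0$ in its unique actor phase and then sum over all such actors, using Lemma~\ref{lem:epoch}. Fix $v\in B_0$ with epoch $i$, and write $s_v=|S_v^{(i)}|$; since $g_v=0$ we have $s_v<2^\beta$. If $\Lambda_v$ is empty, $v$ skips its phase and sends nothing. Otherwise, by Lemma~\ref{lem:q_adv_one_phase}, w.h.p.\ $v$ processes a sequence of port ranges $R_1,\dots,R_k$, namely those associated with $\Lambda_v$ and the proxy advice $\Pi_{w_1},\dots,\Pi_{w_{k-1}}$. Each range $R_j$ contains at least one sleeping neighbor $w_j$, by the oracle's construction. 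Let $c_j$ be the number of sleeping neighbors in $R_j$. The ranges are pairwise disjoint (Lemma~\ref{lem:p_range}), so $\sum_j c_j\le s_v$, and since each $c_j\ge1$ we also get $k\le s_v$.

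By Lemma~\ref{lem:grover_iter} and $|R_j|\le n/2^{\beta-1}$ (Lemma~\ref{lem:p_range}), the call on $R_j$ costs $O(\sqrt{|R_j|\,c_j}\log n)=O(\sqrt{n c_j/2^{\beta}}\log n)$. Each woken node also sends one classical wake-up message and one reply carrying its proxy advice, which adds $O(c_j)$ messages. Applying Cauchy--Schwarz,
\[
\sum_{j=1}^k \sqrt{c_j}\le \sqrt{k\sum_j c_j}\le s_v .
\]
Hence $v$'s phase costs $O\bigl(s_v\sqrt{n/2^\beta}\log n + s_v\bigr)$.

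Summing over $v\in B_0$ then uses the disjointness part of Lemma~\ref{lem:epoch}: the sets $S_v^{(i)}$ are pairwise disjoint, so $\sum_{v\in B_0}s_v\le n$. This yields a total of $O(n\sqrt{n/2^\beta}\log n)=O(\sqrt{n^3/2^\beta}\log n)$. For the high-probability guarantee, we take a union bound over the at most $n$ actors, each failing with probability at most $n^{-c-1}$, obtained by adjusting the constant in Lemma~\ref{lem:grover_iter}.

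The step needing the most care is the correspondence between the algorithm's actual behavior and the oracle's predictions. We must check that the sleeping neighbors of $v$ at phase $\id(v)$ are exactly $S_v^{(i)}$, so that the $c_j$ are the counts the oracle anticipated and each $R_j$ is nonempty. This is what Lemma~\ref{lem:q_adv_one_phase} provides, and it rests on an induction over earlier phases succeeding w.h.p. A second point is that $f$ must remain fixed during each $\iterqsearch$ call. This holds because only the actor $v$ wakes nodes during its phase, and $v$ itself removes each found element from the search set. On the failure event, where an earlier search has erred, we simply do not claim the bound; this is acceptable since the lemma is stated only w.h.p.
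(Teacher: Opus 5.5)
Your proof is correct and follows the paper's argument essentially step for step. You bound each actor's cost by Lemma~\ref{lem:grover_iter} on ranges of size at most $2n/2^\beta$, combine the per-range costs via Cauchy--Schwarz using that the number of invocations is at most $|S_v^{(i)}|$, and sum over $B_0$ using the disjointness from Lemma~\ref{lem:epoch}. Your explicit accounting of the $O(s_v)$ classical wake-up and proxy-reply messages, and your remark on why $f$ stays fixed within each search, are small additions that the paper leaves implicit.
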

\begin{proof}
Suppose that $v$ is an actor of some epoch $i$ and is given advice $g_v=0$.
Let $k = |S_v^{(i)}|$. 
We will show that the total number of messages sent by $v$ throughout the execution is at most $O\lt( k \sqrt{\frac{n}{2^\beta}}\log n\rt)$, with high probability. 
Let $r$ be the starting round of the phase in which $v$ is the actor. 
By the description of the algorithm, starting from round $r$, node $v$ invokes $\iterqsearch([p_1,p_2],f)$ where $[p_1, p_2]$ is the port range derived from $\Lambda_v$. Subsequently, it invokes $\iterqsearch([p_1', p_2'])$ where $[p_1', p_2']$ is the port range derived from the proxy advice $\Pi_w$ received from a (previously sleeping) neighbor $w \in S_v^{(i)}$, whose port was in $[p_1', p_2']$. Node $v$ repeats this process until it receives no new proxy advice from its neighbors. 
Let $\ell$ be the total number of times $v$ invokes $\iterqsearch$. It follows from Lemma~\ref{lem:q_adv_one_phase} that, with high probability, $\ell \le k$. %
Let $\iterqsearch_j$ denote the $j$-th invocation of $\iterqsearch$ by $v$, and $k_j$ be the number of sleeping nodes in the port range used by $\iterqsearch_j$. We have $\sum_{j=1}^{\ell} k_j = k.$
Since each invocation of $\iterqsearch$ operates on a search range determined by the associated port range of the respective proxy advice, it follows from Lemma~\ref{lem:p_range} that the size of each range is at most $2n/{2^{\beta}}$. 
By Lemma~\ref{lem:grover_iter}, the message complexity of $\iterqsearch_j$ is $O(\sqrt{\frac{n}{2^\beta}k_j}\log n)$ with high probability. 
Summing over all $\ell$ invocations and taking a union bound, the total message complexity is (w.h.p.)
\begin{align}
O\lt( \sqrt{\frac{n}{2^\beta}}\sum_{j=1}^{l}\sqrt{k_j}\log n\rt)  
& =  
O\lt(\sqrt{\frac{n}{2^\beta}}\sqrt{lk}\log n\rt)   \label{eq:q_adv_cs} \\
\ann{since $l \le k+1$}
&=
O\lt(\sqrt{\frac{n}{2^\beta}}k\log n\rt),\notag
\end{align}
where we have used the Cauchy-Schwarz inequality in \eqref{eq:q_adv_cs}. 

To complete the proof, we note that, by Lemma~\ref{lem:epoch}, each $v \in B_0$ wakes up a distinct set of sleeping neighbors.
Given that there are at most $n$ sleeping neighbors in total, the total message complexity incurred by all nodes in $B_0$ during the phase where they are active is $O\lt( \sqrt{\frac{n^3}{2^\beta}}\log n\rt)$ (w.h.p.) as claimed.
\end{proof}

\begin{lemma} \label{lem:q_adv_msgs_g1}
Let $B_1$ be the set of nodes such that, for every $v \in B_1$, it holds that $g_v=1$.
Then, the message complexity over all phases in which one of the nodes in $B_1$ is actor is at most %
$O\lt( \sqrt{\frac{n^3}{2^{\beta}}}\log n \rt)$ with high probability.
\end{lemma}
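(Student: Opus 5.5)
For each $v \in B_1$ with actor epoch $i$, let $k_v = |S_v^{(i)}|$; by the oracle's rule for $g_v$ we have $k_v \ge 2^\beta$. The plan is to bound $v$'s cost per actor, sum over actors using disjointness, and then maximize the resulting sum.

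\textbf{Per-actor cost.} By Lemma~\ref{lem:q_adv_one_phase}, with high probability $S_v^{(i)}$ is exactly the set of sleeping neighbors of $v$ when its phase starts. In that phase $v$ runs $\iterqsearch$ over its whole port range $[\deg(v)]$, so $|X| = \deg(v) \le n$ and $C_v = k_v$. Lemma~\ref{lem:grover_iter} then gives a cost of $O(\sqrt{n k_v}\log n)$ messages, with high probability. No other messages are charged to $v$'s phase; if classical wake-up messages are counted separately, they contribute $O(k_v)$, which is dominated.

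\textbf{Summing over actors.} Apply a union bound over the at most $n$ actors. By Lemma~\ref{lem:epoch}, the sets $S_v^{(i)}$ are pairwise disjoint, so $\sum_{v \in B_1} k_v \le n$.

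\textbf{Maximizing the sum.} The key point is that the lower bound $k_v \ge 2^\beta$ also caps the number of actors: $|B_1| \le n/2^\beta$. Cauchy--Schwarz then yields
\[
\sum_{v\in B_1} \sqrt{k_v} \le \sqrt{|B_1| \sum_{v \in B_1} k_v} \le \sqrt{\frac{n}{2^\beta}\cdot n} = \frac{n}{\sqrt{2^\beta}}.
\]
Multiplying by $\sqrt{n}\log n$ gives a total of $O\bigl(\sqrt{n^3/2^\beta}\log n\bigr)$, which is the claim.

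\textbf{Case $\alpha \le 2$.} Here no advice is given and every actor searches its full port range. Now $\beta = 0$, and the same Cauchy--Schwarz step with $|B_1| \le n$ gives $O(n^{3/2}\log n)$, matching the claim for $\beta = 0$.

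The main obstacle is the maximization step. One has to notice that the threshold defining $g_v = 1$ bounds the number of such actors, and combining this with disjointness is exactly what makes Cauchy--Schwarz produce the factor $2^{-\beta/2}$. The remaining steps, namely the union bound and the fact that the searched function is unchanged during the invocation, since only $v$ acts in its phase, are routine.
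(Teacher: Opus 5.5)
Your proof is correct and follows the paper's argument. The per-actor cost $O(\sqrt{n k_v}\log n)$ comes from Lemma~\ref{lem:grover_iter}, the threshold $k_v\ge 2^\beta$ together with the disjointness in Lemma~\ref{lem:epoch} gives $|B_1|\le n/2^\beta$, and Cauchy--Schwarz with $\sum_v k_v\le n$ finishes. One small point on your optional $\alpha\le 2$ aside, which lies outside the lemma since $g_v$ is then undefined: actors with $k_v=0$ still pay $O(\sqrt{n}\log n)$ for the final empty search, so you should sum $\sqrt{n\max\{k_v,1\}}$; since $\sum_v \max\{k_v,1\}\le 2n$, the $O(n^{3/2}\log n)$ bound still holds.
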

\begin{proof}
Recall that each node $v$ is associated with a subset $S_v^{(i)}$ for an epoch $i$. 
Given that $g_v = 1$, we have $|S_v^{(i)}| \geq 2^\beta$. 
By Lemma~\ref{lem:epoch}, since all subsets $S_v$ for $v \in B_1$ are disjoint, we have 
\begin{align}
|B_1| \le \frac{n}{2^\beta}. \label{eq:q_adv_B}
\end{align}
Recall that $v$ invokes $\iterqsearch$ (on its entire port range). 
By Lemma~\ref{lem:grover_iter}, all asleep neighbors of 
$v$ are awakened with high probability, and this incurs a message complexity of $O\lt(\sqrt{nk_v}\cdot \log n\rt).$
Summing up over all phases in which a node in $B_1$ is actor and taking a union bound, we have the total message complexity is (w.h.p.)
\begin{align}
    O\lt(\sum_{v \in B_1} \sqrt{k_vn}\cdot\log n \rt) 
    &= 
   O\lt( \sqrt{n} \sum_{v\in B_1} \sqrt{k_v}\cdot \log n\rt) \notag \\ 
   \ann{by Cauchy-Schwarz ineq.}
   &=
  O\lt(\sqrt{n|B_1|\sum_{v \in B_1}k_v }\cdot \log n \rt) \notag \\ 
  \ann{by \eqref{eq:q_adv_B}}
  &= 
  O\lt(\sqrt{\frac{n^3}{2^\beta}}\log n \rt).
\end{align}
\end{proof}

\begin{lemma} \label{lem:q_adv_time}
The algorithm terminates in $O\lt(\arad \cdot n^2\log n\rt)$ rounds.
\end{lemma}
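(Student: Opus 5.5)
The bound follows by multiplying the number of epochs, the number of phases per epoch, and the length of one phase. By construction, the algorithm runs for $\arad$ epochs, each of exactly $n$ phases, and each phase lasts exactly $\tau = C n\log n$ rounds. This gives a total of $\arad\cdot n\cdot C n\log n = O(\arad\cdot n^2\log n)$ rounds. What remains is to justify two things: that $\arad$ epochs suffice, and that $\tau$ rounds suffice for any actor to finish its work within its own phase.

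\emph{Number of epochs.} I would argue by induction that $A_i$ is exactly the set of nodes at distance $i-1$ from the initially awake set $A_1$. A node in $S_i$ is a neighbor of $A_i$ that lies neither in $A_i$ nor in $A_{i-1}$. Since $A_i$ is at distance $i-1$, its neighbors lie at distance $i-2$, $i-1$ or $i$ from $A_1$. By induction, the first two groups are contained in $A_{i-1}\cup A_i$, so $S_i=A_{i+1}$ is precisely the distance-$i$ layer. Every node lies within distance $\arad$ of $A_1$, so all sleeping nodes belong to some $A_i$ with $i\le \arad+1$. Lemma~\ref{lem:q_adv_one_phase} says that each actor of epoch $i$ wakes all of $S_v^{(i)}$ within its phase (w.h.p.), so by the end of epoch $\arad$ every node is awake. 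The nodes of the final layer need no further epoch, since they have no sleeping neighbors left.

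\emph{Phase length.} This is the step that needs care. In a single phase, the actor $v$ makes a sequence of $\iterqsearch$ calls on port ranges $X_1,\dots,X_\ell$, which are pairwise disjoint by Lemma~\ref{lem:p_range}. Call $j$ finds $k_j$ neighbors, with $\sum_j k_j\le\deg(v)\le n$. By Lemma~\ref{lem:grover_iter}, call $j$ takes $O(\sqrt{|X_j|\max\{k_j,1\}}\log n)\le O((|X_j|+\max\{k_j,1\})\log n)$ rounds. Each call on a nonempty range finds at least one node, except possibly the last, so $\ell\le k+1$ where $k=\sum_j k_j$. Summing over the calls gives $O((\sum_j|X_j| + k + 1)\log n)=O(n\log n)$ rounds.

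The case $g_v=1$, or no advice at all, is a single call on $[\deg(v)]$, which takes $O(\sqrt{n\cdot n}\log n)=O(n\log n)$ rounds. The classical wake-up messages and the proxy-advice replies add $O(1)$ rounds per found node, which is $O(n)$ in total. Choosing the constant $C$ large enough therefore guarantees that every actor finishes within $\tau$ rounds, so the fixed schedule of phases is consistent.
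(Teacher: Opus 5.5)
Your proof is correct and follows the same route as the paper, which multiplies the $\arad$ epochs, the $n$ phases per epoch, and the $O(n\log n)$ rounds per phase. Your extra justifications fill in details the paper leaves implicit: the BFS-layer characterization of the sets $A_i$, and the check that an actor's chain of $\iterqsearch$ calls fits into $\tau = Cn\log n$ rounds (via the disjointness of port ranges from Lemma~\ref{lem:p_range} and $\sqrt{ab}\le a+b$).
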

\begin{proof}
This follows directly by recalling that each phase takes $O\lt( n\log n\rt)$ rounds, there are $n$ phases per epoch, and all nodes are awake after $\arad$ epochs with high probability.
\end{proof}

\newcommand{\req}{\mathsf{req}}
\newcommand{\search}{\mathsf{Search}}
\section{A Lower Bound for Distributed Quantum Algorithms} \label{sec:kt0_quantum_lb}

Throughout this section, we assume that $n$ is a multiple of four and that the network has $2n$ nodes. 
The first step in obtaining the lower bound for the wake-up problem, is a reduction between the wake-up problem and finding a ``hidden'' perfect matching in a certain class of graphs, which we define as follows: 
The nodes are equally partitioned into the \emph{center nodes} $V=\set{v_1,\dots,v_n}$ and the initially asleep nodes in $W=\set{w_1,\dots,w_n}$.
We fix the IDs of the nodes in $V$ such that $v_i$ has ID $i$, and, similarly, fix the IDs of the nodes in $W$ to an arbitrary permutation of $[2n] \setminus [n]$.
Let $C$ be the graph consisting of the $n$-clique on the nodes in $V$ that we obtain after fixing an arbitrary port assignment $\pi_C$. 
Consider the set $\mathcal{M}$ of all possible perfect matchings on $C$, each of which consists of exactly $n/2$ edges.
Suppose that $M \in \mathcal{M}$ consists of edges $\set{v_{i_1},v_{j_1}},\dots,\set{v_{i_{n/2}},v_{j_{n/2}}}$, whereby all indices $i_k,j_{k'}$ ($k,k' \in [n/2]$) are distinct.
Since the port assignment is fixed, $M$ induces a \emph{port set} $P_M$ of port-pairs associated with the matched edges: $P_M=(p_{i_1},p_{j_1}),\dots,(p_{i_{n/2}},p_{j_{n/2}})$.
That is, $p_{i_k}$ is the port of $v_{i_k}$ connecting to $v_{j_k}$, and $p_{j_k}$ is defined analogously.
We use $\mathcal{H}_n$ to denote the set of all possible lower bound graphs obtained in this manner. 
Given a matching $M \in \mathcal{M}$ and the associated port set $P_M$, we obtain a (unique) lower bound graph $G \in \mathcal{H}_n$ on the nodes $V \cup W$, by modifying the edges of the clique $C$. 
That is, for each $k \in [n/2]$, we do the following:
\begin{itemize} 
\item Remove edge $\set{v_{i_k},v_{j_k}}$.
\item Add edge $\set{v_{i_k},w_{i_k}}$ and let $p_{i_k}$ be $v_{i_k}$'s port connecting to $w_{i_k}$.
\item Add edge $\set{v_{j_k},w_{j_k}}$ and let $p_{j_k}$ be $v_{j_k}$'s port connecting to $w_{j_k}$.
\end{itemize}
As a result of this process, each $w_{i'}$ has exactly one edge and hence it connects to $v_{i'}$ via port $1$.
Since we only consider executions of the assumed distributed algorithm on graphs in $\mathcal{H}_n$, we can assume that every node knows the port assignment function $\pi_C$.
It is straightforward to see that providing additional knowledge to the nodes can only strengthen the lower bound. 
Note that, however, the nodes do not know which perfect matching $M$ was used to construct the specific lower bound graph $G \in \mathcal{H}_n$. We call $M$ a \emph{hidden perfect matching in $G$}.

\begin{fact}
Every node $v_i \in V$ has the same initial state in $C$ as it does in each $G \in \mathcal{H}_n$.
\end{fact}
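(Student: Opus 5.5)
The plan is to pin down what a node's ``initial state'' consists of in the model of Section~\ref{sec:qmodel}, and then check that each ingredient is the same in $C$ and in every $G \in \mathcal{H}_n$. According to the model, before any communication happens, the state of a node $v_i$ is determined by four things: its ID, the value $n$ (here the total number of nodes), its degree $\deg(v_i)$, and its private (or shared) random bits. In addition, for the lower bound we grant every node knowledge of $\pi_C$. No other information is available at the start, and no advice is present in this setting. Its communication registers $\EReg_{v_i\to p}, \RReg_{v_i\gets p}, \ECReg_{v_i\to p}, \RCReg_{v_i\gets p}$ are indexed by ports $p \in [\deg(v_i)]$ and begin in the same fixed initial state. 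So it is enough to show that all of these data coincide.

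\textbf{Identifiers and global knowledge.} The IDs of $V$ are fixed as $\id(v_i)=i$ in every graph of $\mathcal{H}_n$. The knowledge of $\pi_C$ is granted uniformly. For $n$, I interpret $C$ here as the reference network in which the node count and the fixed ID assignment are viewed from $v_i$'s perspective. That is, the parameter handed to $v_i$ is the same number $2n$ in both cases, and $v_i$ cannot tell the networks apart.

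\textbf{Degree.} This is the only step that actually needs checking, and it is easy. In $C$, $v_i$ has degree $n-1$. In $G$, built from a perfect matching $M$, exactly one edge incident to $v_i$ is removed: $v_i$ is matched exactly once, since $M$ is perfect. Exactly one edge, $\{v_i,w_i\}$, is added. Moreover, the new edge reuses the same port $p_i$ that the removed edge used. Hence $\deg(v_i)=n-1$ in both graphs, and the set of ports $[n-1]$, together with the registers indexed by it, is identical.

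\textbf{Randomness and conclusion.} The random bits are drawn from the same distribution independently of the topology. Therefore the joint initial state, including any local quantum registers initialized by the algorithm, is a function only of identical data, and so it is the same. The main subtlety, rather than an obstacle, is to stress what the fact does \emph{not} claim: the port-to-neighbor map differs between the two graphs, since port $p_i$ now leads to $w_i$. This map is not part of the initial state, because in the port numbering model nodes do not know the endpoints of their ports. It becomes visible only through communication, which is exactly what the later simulation argument exploits.
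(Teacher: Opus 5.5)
Your argument is correct and matches the reasoning the paper leaves implicit, since it states the Fact without proof: because $M$ is perfect, each $v_i$ loses exactly one clique edge and gains the edge to $w_i$ on the same port $p_i$. Hence its ID, degree, port set, knowledge of $\pi_C$, randomness and register initialization all coincide, while the port-to-neighbor map, which does differ, is not part of the initial state. Your explicit choice to take the network-size parameter as $2n$ in both cases settles a point the paper glosses over, and it is consistent with the paper's convention that the network has $2n$ nodes throughout the lower-bound section.
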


\paragraph{Finding the Hidden Matching:}
We say that an algorithm solves the $\matching$ problem on a given graph $G \in \mathcal{H}_n$ with hidden matching $M$, if, for every edge $e \in M$, at least one of the two IDs of the endpoints of $e$ is output by its corresponding node in $C$.
For each $v_i \in C$, we use $X_i$ to denote the ID of its matched neighbor in $M$. That is, $\{v_i, v_{X_i}\}$ is an edge in $M$.  
It is straightforward to see that any wake-up (quantum or classical) algorithm must also solve the $\matching$ problem in our lower bound graphs without asymptotic overhead:
Once all nodes are awake, each $w_i$ sends a single message to its neighbor, allowing $v_i$ to learn the port $j$ that leads to $w_i$. Thus $v_i$ can set $X_i = \pi_C(i,j)$.\footnote{Recall that 
$\pi_c$ is fixed and known to all nodes.}
This contributes $n$ additional messages. Since any distributed wake-up algorithm already uses 
$\Omega(n)$ classical wake-up messages, the asymptotic message complexity is unchanged.
 
\begin{fact}
Any quantum wake-up algorithm that has a message complexity of $\mu$ on the graphs in $\mathcal{H}_n$ where the initially awake nodes are the set $V$, gives rise to a $\matching$ algorithm with message complexity $O(\mu)$ that has the same upper bound on the error probability.
\end{fact}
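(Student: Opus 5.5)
The plan is a direct reduction: take the assumed wake-up algorithm $\mathcal{A}$ with message complexity $\mu$ and success probability $p$, and append the short post-processing step described just above. First I would fix the initially awake set to be exactly $V$. In every $G\in\mathcal{H}_n$ the nodes of $W$ are asleep and each $w_i$ has degree one, attached to $v_i$. So any execution of $\mathcal{A}$ that ends with all nodes awake must have delivered a classical message to every $w_i$. Moreover, no $w_i$ can be woken by a quantum message, since the model specifies that quantum messages only cause a phase reflection at a sleeping node.

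The combined algorithm $\mathcal{A}'$ runs $\mathcal{A}$ to completion, using Lemma~\ref{lem:quantization} if we want to treat $\mathcal{A}$ as a purified procedure; this changes neither rounds nor messages. After that, every $w_i$ sends one classical message over its unique port, port $1$. Node $v_i$ receives it on some port $j$, and since the only edge of $G$ that differs from $C$ at $v_i$ is the one to $w_i$, the port $j$ must be exactly the port $p_i$ that in $C$ led to $v_i$'s matched partner. Because every node knows $\pi_C$, $v_i$ can output $X_i=\pi_C(i,j)$. This step requires $w_i$ to know when $\mathcal{A}$ has terminated. I would handle that by letting $w_i$ respond immediately upon waking, so no termination detection is needed. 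A $w_i$ that never wakes simply never responds, and that case is already counted as a failure of $\mathcal{A}$.

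The correctness event for $\mathcal{A}'$ is contained in the event that $\mathcal{A}$ woke everyone. On that event every $v_i$ outputs its true partner, so every edge of $M$ has at least one endpoint output (in fact both). The error probability is therefore at most that of $\mathcal{A}$.

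For the message count, $\mathcal{A}'$ sends $\mu$ messages plus exactly $n$ extra classical messages. Waking the $n$ nodes of $W$ already forces at least $n$ classical messages in $\mathcal{A}$, so $\mu\ge n$ on every successful run and the total is $\mu+n\le 2\mu=O(\mu)$.

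The only delicate point is reconciling the probabilistic message bound with this accounting. The bound $\mu$ holds for each configuration of the superposition with the stated probability, whereas the $n$ extra messages are deterministic. I would argue that $\mu\ge n$ can be assumed without loss of generality: if $\mu<n$, then $\mathcal{A}$ cannot wake all of $W$ on any branch obeying the bound, and the statement holds vacuously. With that assumption, the bound of $O(\mu)$ holds with the same probability.
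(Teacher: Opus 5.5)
Your proposal is correct and follows the paper's argument: run the wake-up algorithm, have each $w_i$ send one classical message to $v_i$ so that $v_i$ learns the port $j$ leading to $w_i$ and outputs $X_i=\pi_C(i,j)$, and absorb the $n$ extra messages because wake-up already needs $\Omega(n)$ classical messages. Letting $w_i$ reply immediately upon waking is a sensible refinement of the paper's ``once all nodes are awake'', since it avoids termination detection. It also lets you charge each extra message to the classical message that woke $w_i$, which bounds the total by twice the original count on every branch and is cleaner than calling the case $\mu<n$ ``vacuous''.
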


\paragraph{The $\desc$ Problem:}
Our lower bound relies on a known quantum query lower bound for computing the single-bit descriptor of a permutation, which, intuitively speaking, tells us only the parity bit of each element. 
Formally, give a permutation $\sigma$ on $[n]$, the \emph{single-bit descriptor of $\sigma$} is an $n$-length bit string $z$ such that $z_i = \sigma(i) \bmod 2$.
Note that we can represent $\sigma$ as an $n \times n$ permutation matrix ${P}$ that, in every row and column, has exactly one $1$ and is all-zeroes everywhere else, whereby  ${P}_{i,j}=1$ if and only if $\sigma(i)=j$.
The problem input is represented by having access to a \emph{permutation unitary} $O_P$ that satisfies
\begin{align}
	O_P : \ket{i,j,b} \to \ket{i,j,b \oplus P_{i,j}}. \notag
\end{align}
As we will elaborate in more detail below, in the context of our lower bound graph construction, we will use ${P}_{i,j}=1$ to mean that node $v_i$ is matched with node $v_j$.
Observe that a perfect matching imposes additional structure on ${P}$, namely ${P}_{i,j}={P}_{j,i}$ and $P_{i,i}=0$.
In other words, we need to restrict ourselves to permutations that are involutions without fixed points.
Thus, we define problem $\desc$ as computing the single-bit descriptor of a given fixed-point-free involution by performing quantum queries to $O_P$.

\begin{lemma}[see Lemma~16 in \cite{van2020quantum}] \label{lem:sbit}
Computing the single-bit descriptor of a permutation $\sigma$ on $[n]$ with probability at least $\frac{2}{3}$ requires $\Omega\lt( n^{3/2} \rt)$ quantum queries.
\end{lemma}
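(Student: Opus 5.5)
Since this is Lemma~16 of \cite{van2020quantum}, the paper simply imports it. If I had to prove it myself, I would reduce from a \emph{strong direct product theorem} for quantum search. The target is the following: computing $\Theta(n)$ independent instances of a search-type function, each on $\Theta(n)$ bits and each requiring $\Omega(\sqrt n)$ queries, with success probability $2/3$ requires $\Omega(n\cdot\sqrt n)$ queries. This is the regime covered by \cite{klauck2007quantum}, or more generally by the multiplicative adversary and strong direct product for the general adversary bound of Lee and Roland. The base function would be \emph{parity of the position of the unique marked element} in a one-hot string of length $m=\Theta(n)$. This has query complexity $\Theta(\sqrt m)$ by the standard adversary bound. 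The weights sit on pairs of inputs whose marked element moves from an odd to an even position, which is the same relation that gives the $\Omega(\sqrt m)$ bound for search.

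The reduction needs a family of permutations in which $\Theta(n)$ rows behave like independent instances of this function. I would take the first $n/2$ rows of $P$ to have their $1$ hidden in disjoint column blocks. Concretely, partition the columns into $n/(2\cdot 2t)$ groups, and let each group of $t$ rows pick a permutation into a block of $2t$ columns whose parities are freely chosen. The remaining rows are then filled in deterministically from the chosen entries, so that the result is still a permutation. An algorithm for the single-bit descriptor of $\sigma$ then outputs the parity bits of all the designated rows, and the simulator can answer each query to $O_P$ with $O(1)$ queries to the underlying instances. A query to a designated row is a query to that instance. A query to a filled-in row is answered by querying the instance that owns the corresponding column, since $P_{i,j}=1$ for a filler row $i$ holds exactly when column $j$ is not taken in its block.

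The main obstacle is exactly this step. A permutation forces every column to be used exactly once, so the rows can never be fully independent. Any block structure that restores independence tends to shrink the effective search space per row from $\Theta(n)$ to something small, and that kills the $\sqrt n$ factor. My first attempt would be to let each block contain $\Theta(n)$ columns but only a constant number of designated rows. This keeps $\Theta(n)$ total instances of size $\Theta(n)$, provided the filler rows' $O_P$-entries can still be computed with $O(1)$ instance queries. If that fails, the fallback is to bypass the reduction altogether. In that case I would run a multiplicative (or general) adversary argument directly on permutations, with adversary weights on pairs $\sigma,\sigma\circ(a\,b)$ where $a,b$ have different parities. I would then prove the required strong direct product by hand, showing that each query can raise the ``progress'' toward guessing $k$ bits correctly by only a factor of $1+O(1/\sqrt n)$, which forces $\Omega(n^{3/2})$ queries for success probability $2/3$.
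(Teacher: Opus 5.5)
Deferring to Lemma~16 of \cite{van2020quantum} is exactly what the paper does, so at that level there is nothing to compare. The problem is that the self-contained argument you sketch does not go through.

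The reduction to independent search instances breaks at the embedding step, as you suspect, and the specific construction you give is wrong as stated. A block of $2t$ columns with $t$ designated rows leaves $t$ free columns. So ``$P_{i,j}=1$ for a filler row $i$ exactly when column $j$ is not taken'' would put $t$ ones into row $i$. Which filler row gets which free column depends on the global pattern of the block (e.g., how many columns to the left of $j$ are taken), and this cannot be decided with $O(1)$ instance queries. Moreover, the $t$ designated rows of one group form a partial injection into the block, so they are not independent instances in the first place. Your ``first attempt'' fails by counting. Disjoint blocks of $\Theta(n)$ columns inside $n$ columns give only $O(1)$ blocks, hence $O(1)$ instances, while overlapping blocks reintroduce the dependence. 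The fallback leaves the entire difficulty to a strong direct product theorem ``proved by hand''. That theorem is not carried out, and it is far more than constant success probability requires.

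The missing idea is that no direct product theorem is needed. Ambainis's adversary bound in its refined form $\Omega\big(\sqrt{mm'/\ell_{\max}}\big)$, with $\ell_{\max}=\max \ell_{x,i}\ell_{y,i}$ over related pairs and positions where they differ, can be applied directly to permutation matrices.
\begin{itemize}
\item Relate $\sigma$ to $\tau=\sigma\circ(a\;b)$ whenever $\sigma(a)\not\equiv\sigma(b)\pmod 2$. With this composition order the parity condition is on $\sigma(a),\sigma(b)$, not on $a,b$ as you wrote.
\item The descriptors then differ in bits $a$ and $b$, and every permutation has $m=m'=n^2/4$ neighbors.
\item $P_\sigma$ and $P_\tau$ differ exactly in the entries $(a,\sigma(a)),(b,\sigma(b)),(a,\sigma(b)),(b,\sigma(a))$.
\item A $1$-entry $(k,\sigma(k))$ of $P_\sigma$ is flipped by $n/2$ neighbors of $\sigma$. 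A $0$-entry $(k,l)$ of $P_\tau$ is flipped by at most one neighbor of $\tau$, namely $\tau\circ(k\;\tau^{-1}(l))$.
\item Hence $\ell_{\max}=n/2$, and the bound is $\Omega\big(\sqrt{(n^4/16)/(n/2)}\big)=\Omega(n^{3/2})$.
\end{itemize}
The product refinement is essential: the cruder form $\sqrt{mm'/(\ell\ell')}$ with $\ell=\ell'=n/2$ only yields $\Omega(n)$. If your version of the theorem requires $f(x)\neq f(y)$ for all $x\in X$, $y\in Y$, restrict to $a\le n/2<b$ and split the permutations by the parity of $\sum_{i\le n/2}z_i$. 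Every permutation still has at least $n^2/8$ neighbors, so the bound survives. This is your fallback with the multiplicative layer removed: the ordinary additive progress measure already suffices.
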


\paragraph{High-Level Overview of the Lower Bounds:}
The main idea behind our lower bounds is that any distributed algorithm that solves the $\matching$ problem on graphs in $\mathcal{H}_{n}$ can be simulated in the sequential quantum query model~\cite{hamoudi2025brief} for computing the single-bit descriptor of a permutation on $[n]$.
To this end, we first establish a lower bound for computing such a descriptor for a special class of permutations, i.e., fixed-point-free involutions, in Lemma~\ref{lem:involution}.  
Then, we show in Section~\ref{sec:qsim} how to simulate a distributed quantum algorithm with message complexity $\mu$ by using $O(\mu)$ queries in the sequential query model.

\begin{lemma}
\label{lem:involution}
The query complexity of solving problem $\desc$ with probability at least $\frac{2}{3}$ is $\Omega\lt( n^{3/2} \rt)$.
\end{lemma}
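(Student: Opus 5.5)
Our plan is a reduction from the single-bit descriptor problem for arbitrary permutations (Lemma~\ref{lem:sbit}) to $\desc$. Given any permutation $\sigma$ on $[n]$, we build a fixed-point-free involution $\tau$ on a domain of size $2n$ that encodes $\sigma$. Each query to $O_{P_\tau}$ will be answerable with $O(1)$ queries to $O_{P_\sigma}$, and the single-bit descriptor of $\sigma$ will be readable from that of $\tau$. Then an algorithm for $\desc$ on $2n$ elements with $q(2n)$ queries gives an algorithm for the permutation descriptor on $n$ elements with $O(q(2n))$ queries. Lemma~\ref{lem:sbit} therefore forces $q(2n)=\Omega(n^{3/2})$, i.e.\ $q(m)=\Omega(m^{3/2})$. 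We may take $n$ even (or pad) so the parameters match the paper's setting where $n$ is a multiple of four.

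\textbf{Construction.} Take the domain $[2n]$ and identify element $i\in[n]$ with ``left'' copy $L_i$ and element $j\in[n]$ with ``right'' copy $R_j$. Label them so parities are preserved: $L_i=2i-1$ and $R_j=2j$ would mix parities, so a cleaner choice is needed. Set $L_i=i$ and $R_j=n+j$ with $n$ even, so that $R_j\equiv j \pmod 2$. Define $\tau(L_i)=R_{\sigma(i)}$ and $\tau(R_j)=L_{\sigma^{-1}(j)}$. This $\tau$ is an involution, since $\tau(\tau(L_i))=L_{\sigma^{-1}(\sigma(i))}=L_i$, and it has no fixed points because it always swaps sides. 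Its matrix is the bipartite block form
\[
P_\tau=\begin{pmatrix}0 & P_\sigma\\ P_\sigma^{\top} & 0\end{pmatrix}.
\]
For $i\le n$ the descriptor bit of $\tau$ is $\tau(i) \bmod 2=(n+\sigma(i))\bmod 2=\sigma(i)\bmod 2$. So the first $n$ bits of the $\desc$ output are exactly the single-bit descriptor of $\sigma$.

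\textbf{Simulating the oracle.} To implement $O_{P_\tau}:\ket{a,c,b}\mapsto\ket{a,c,b\oplus (P_\tau)_{a,c}}$, we first compute into an ancilla which block $(a,c)$ lies in. We also compute the corresponding index pair: $(a,c-n)$ in the upper-right block, $(c,a-n)$ in the lower-left block (using $(P_\sigma^\top)_{a-n,c}=(P_\sigma)_{c,a-n}$), and a dummy in the diagonal blocks. Controlled on the block being off-diagonal, we apply one $O_{P_\sigma}$ query to the target bit. Finally we uncompute the ancilla, which involves only classical reversible arithmetic and no queries. This costs exactly one query per simulated query and is correct in superposition, so the success probability $2/3$ carries over unchanged.

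The only point needing care is the parity bookkeeping in the relabeling, so that the descriptor of $\tau$ restricted to the left copies is exactly that of $\sigma$. The choice $R_j=n+j$ with $n$ even settles it. If $n$ is odd, one can instead use $R_j=n+2\lceil\cdot\rceil$-style offsets, or simply invoke the lemma at size $n-1$ after fixing one element. Once this is in place, the remaining steps are routine, and combining with Lemma~\ref{lem:sbit} yields the claimed $\Omega(n^{3/2})$ bound.
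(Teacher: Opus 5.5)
Your proposal is correct and essentially the same as the paper's proof. It uses the same fixed-point-free involution $\tau(i)=n+\sigma(i)$, $\tau(n+j)=\sigma^{-1}(j)$ with block matrix $\begin{pmatrix}0&P_\sigma\\ P_\sigma^{\top}&0\end{pmatrix}$, simulated with one query to $O_{P_\sigma}$ per query (your compute--query--uncompute handling of the index registers is cleaner than the paper's). The one fix needed is for odd $n$: your suggested offsets are not well defined, but you don't need them, since taking $z_i = z'_i \oplus (n \bmod 2)$, as the paper does, handles both parities directly.
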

\begin{proof}
We will show a reduction to the lower bound in Lemma~\ref{lem:sbit}.
Consider an algorithm $\mathcal{A}$ that solves $\desc$ with query complexity $q$ and succeeds with probability at least $\frac{2}{3}$. 
Let $P$ be the matrix representation of a permutation $\sigma$ on $[n]$, and note that $\sigma$ is not necessarily an involution and may have fixed points. 
We define a fixed-point-free involution $\sigma'$ on $[2n]$ with a $2n\times 2n$ permutation matrix $P'$, and simulate access to $O_{P'}$ to execute algorithm $\mathcal{A}$.
For every $i \in [n]$, it holds that
\begin{align} 
\sigma'(i)   &= n + \sigma(i); \label{eq:sigma1}
\\ 
\sigma'(n+i) &= \sigma^{-1}(i). \label{eq:sigma2}
\end{align}
Note that $\sigma'(\sigma'(i))=\sigma'(n+\sigma(i))=\sigma^{-1}(\sigma(i))=i$, as required.
Assuming that the $\desc$-algorithm $\mathcal{A}$ terminates correctly, we obtain the single-bit descriptor $z'$, which is a bit string of length $2n$.
Finally, we compute the single-bit descriptor $z$ of $P$ by defining $z_i = z'_i \oplus (n \bmod{2}).$
The correctness of the computed string $z$ and the success probability are immediate from \eqref{eq:sigma1}, \eqref{eq:sigma2}, and the fact that $\mathcal{A}$ succeeds with probability at least $\frac{2}{3}$.

It remains to show that we can define $O_{P'}$ such that each query can be simulated by a single query to $O_P$.
Note that $P'=
\lt[\begin{matrix}
0 & P\\
P^T & 0.
\end{matrix}
\rt]
$
Thus, we define 
\begin{align}
	O_{P'}\ket{i,j,b} =
  \begin{cases}
		\ket{i,j,b} & \text{if $((i\le n) \land (j\le n) \lor ((i>n) \land (j>n))$}   \\
		O_P\ket{j,i-n,b} & \text{if $i> n$ and $j\le n$}   \\
		O_P\ket{i,j-n,b} & \text{if $i\le n$ and $j>n$,}   \\
  \end{cases}
\end{align}
which ensures that we need to make at most one call to $O_P$ for each query.
Finally, by Lemma~\ref{lem:sbit}, we conclude that $\mathcal{A}$ requires at least $\Omega\lt( n^{3/2} \rt)$ quantum queries in the worst case.
\end{proof} 

\newcommand{\vac}{\perp}
\newcommand{\inbox}{\textsc{inbox}}
\newcommand{\pre}{U_\textsc{prep}}
\newcommand{\route}{U_\textsc{deliver}}
\newcommand{\oracle}{O}
\newcommand{\xor}{\oplus}
\newcommand{\target}{\mathsf{tgt}}
\newcommand{\send}{\textsc{send}}
\newcommand{\receive}{\textsc{receive}}
\newcommand{\recv}{\textsc{receive}}
\newcommand{\round}{\textsc{round}}
\newcommand{\psend}{\textsc{psend}}
\newcommand{\Id}{\textsc{I}}
\newcommand{\net}{\mathrm{net}}
\newcommand{\outbox}{\textsc{outbox}}
\newcommand{\Msg}{\mathsf{message}}
\newcommand{\mem}{\textsc{memory}}
\newcommand{\push}{U_\textsc{push}^{(r)}}
\newcommand{\SWAP}{\mathrm{SWAP}}

\subsection{Simulating Quantum Routing in the Query Model} \label{sec:qsim}
We now show how to simulate a message-efficient distributed quantum algorithm in the query model with low query complexity. Combining the following Lemma~\ref{lem:qsim} with Lemma~\ref{lem:involution} completes the proof of Theorem~\ref{thm:lb}.

\begin{lemma} \label{lem:qsim}
Any quantum distributed algorithm that solves the $\matching$ problem with a message complexity of $\mu$ on graphs in $\mathcal{H}_n$, gives rise to a query algorithm for the $\desc$ problem with a query complexity of $O(\mu)$ without increasing the probability of error.
\end{lemma}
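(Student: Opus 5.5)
The plan is to build a query algorithm $\mathcal{B}$ for $\desc$ that keeps, in its own workspace, every local register and every communication register ($\EReg$, $\RReg$, $\ECReg$, $\RCReg$) of all $2n$ nodes of the network. Given a fixed-point-free involution $\sigma$ with matrix $P$, let $G_\sigma\in\mathcal{H}_n$ be the graph whose hidden matching is $M=\{\{v_i,v_{\sigma(i)}\}\}$; this is a valid perfect matching exactly because $P$ is symmetric with zero diagonal. By the Fact above, the initial states of all nodes and all local computations do not depend on $M$, since $\pi_C$ is publicly known. This also holds for the nodes $w_i$, which have degree $1$ and use port $1$. By Lemma~\ref{lem:quantization} we may assume the distributed algorithm has no intermediate measurements. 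Then every local step is a fixed unitary on the simulated registers, and $\mathcal{B}$ applies it without any query. The only input-dependent part is the communication step of each round, and that is the only place where $\mathcal{B}$ calls $O_P$.

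The key fact for routing is the following. Consider port $p$ of $v_i$ and put $j=\pi_C(i,p)$. If $P_{i,j}=0$, the port leads to $v_j$, arriving at port $\pi_C(j,i)$, and the reverse direction agrees by symmetry of $P$. If $P_{i,j}=1$, the port leads to $w_i$'s port $1$. So a single message on $(i,p)$ is routed by three steps:
\begin{enumerate}
\item Query $O_P\ket{i,j,0}$ to write $b=P_{i,j}$ into an ancilla.
\item Controlled on $b$, swap the emission register either into $\RReg_{v_j\gets\pi_C(j,i)}$ or into $w_i$'s receive register. A quantum message to a still-sleeping $w_i$ instead gets the $-1$ phase, and a classical message wakes $w_i$, which is tracked by a simulated classical flag.
\item Query $O_P$ again to uncompute $b$.
\end{enumerate}
This costs two queries per message. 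Messages sent by the $w_i$ go back over the same edge and are handled by the same controlled swap.

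The hard step is getting the query count to match the message complexity when the network state is a superposition over communication patterns. In particular, a single message sent in superposition over all ports of $v_i$ must cost $O(1)$ queries, not $\deg(v_i)$. My plan is to give each round $r$ a fixed budget of $\mu_r$ ``message slots'':
\begin{enumerate}
\item Coherently compute, with a query-free unitary controlled on which emission registers are nonempty, a list register holding the sorted tuple of (node, port) addresses of the at most $\mu_r$ nonempty emission registers. This bound holds in every branch by the definition of quantum message complexity.
\item For each slot $k\le\mu_r$, do the two-query routing step above on the address stored in slot $k$, controlled on the slot being occupied. A single query to $O_P$ acts on the superposed address $(i,j)$, so superposed destinations cost nothing extra.
\item Uncompute the list. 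I expect this to be the delicate point: after the swaps, the emission registers are empty, and the routing must be reversible. I would handle it by computing the list from the emission side before routing and uncomputing it from the now-filled receive registers together with the (query-free) knowledge of $\pi_C$ and which side each message landed on. If needed, I would fall back on applying the same two-query trick again to recompute the destinations.
\end{enumerate}
Summing over rounds gives $2\sum_r\mu_r=O(\mu)$ queries. The simulated joint state then equals the real execution on $G_\sigma$ exactly, so the output distribution, and hence the error probability, is unchanged.

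Finally, $\mathcal{B}$ measures the outputs. For every edge $\{v_i,v_{\sigma(i)}\}$ of $M$, at least one endpoint, say $v_i$, outputs $X_i=\sigma(i)$. Then $\mathcal{B}$ sets $z_i=X_i \bmod 2$ and $z_{X_i}=i \bmod 2$, using $\sigma(\sigma(i))=i$. This recovers the whole single-bit descriptor whenever the $\matching$ algorithm succeeds, with no further queries.
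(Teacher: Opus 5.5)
Your architecture is the paper's. Your $\mu_r$ slots are its outbox, and your compute-$b$/controlled-swap/uncompute-$b$ step is its $L_t\circ O_X\circ D_t\circ O_X\circ L_t$. The count $2\sum_r\mu_r$ is the same, and your decoding of $z$ (using $z_{X_i}=i \bmod 2$) is more careful than the paper's.

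There is, however, a genuine gap on the $W$ side. A message emitted by $w_i$ must end up in $\RReg_{v_i\gets p^*}$, where $p^*$ is the port with $\pi_C(i,p^*)=\sigma(i)$. This is not ``the same controlled swap''. The sender address $(w_i,1)$ gives you no pair $(i,j)$ to query, and locating $p^*$ is a search over row $i$ of $P$, which costs $\Omega(\sqrt{n})$ queries. Any procedure that delivers this message to the correct port-indexed register without prior information would reveal $\sigma(i)$ when you measure where it landed.

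This case is central, not marginal. The $w_i\to v_i$ message is exactly how $v_i$ learns its output in the reduction from wake-up. If the $w_i$ could start awake, $n$ such messages would solve $\matching$, and a per-message simulation would give an $O(n)$-query algorithm for $\desc$, contradicting Lemma~\ref{lem:involution}. So any valid simulation must use that $W$ starts asleep, and your proposal never does.

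The same obstruction defeats your uncomputation of the address list. For a message delivered into $w_i$, the stored address $(i,p^*)$ is not a query-free function of the destination, and recomputing destinations does not erase sources. Leaving it as garbage is not harmless. It marks the $p^*$ branch and destroys interference that a later step (e.g., a diffusion over $v_i$'s ports) may rely on, so the simulated state is no longer the real one.

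The missing idea is to cache $p^*$ when $w_i$ wakes. Since $w_i$ can only be woken by a classical message, that message passes through a slot holding $(i,p)$ with $b=1$, so $p=p^*$ is available at that moment. Controlled on $b=1$, the message being classical, and $w_i$ still being asleep, XOR $p$ into a fresh register $R_i$. Then $R_i=p^*$ exactly in the branches where $w_i$ is awake and $R_i=0$ otherwise. It is thus a deterministic function of $w_i$'s wake status and leaves the output distribution unchanged.

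With $R_i$ available, messages from $w_i$ are swapped into $\RReg_{v_i\gets R_i}$ with no query. Every filled receive register also determines its source with $O(1)$ queries. Use $R_i$ for destinations in $W$. For $\RReg_{v_k\gets q}$, use $\pi_C$ plus one query (or $R_k$) to tell whether the message came from $v_{\pi_C(k,q)}$ or from $w_k$. Your plan of uncomputing the list from the receive side then goes through with $O(\mu_r)$ queries per round.

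The paper's write-up does not supply this step either. It sends $W$'s messages to node-indexed registers $\receive_{v_i\leftarrow w_i}$ while omitting the port-receive registers. It also does not show how the outbox indices $(I_t,J_t)$ are returned to $\perp$.
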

\begin{proof} 
To prove the lemma, we start with an instance of the $\desc$ problem in the sequential query model, which is given by a permutation matrix $P$, where the values of $P$ are only accessible via queries to the permutation unitary $O_P$. 
We show that we can simulate a quantum distributed algorithm for solving the $\matching$ problem on a graph $G \in \mathcal{H}_n$, where, for each matched edge, one of its endpoints $v_i$ needs to output $X_i$, the ID of the node that matched with $v_i$.  
The edges of the hidden matching that determines $G$ are uniquely defined by the involution corresponding to the input matrix $P$ which is accessible only via the permutation unitary $O_P$. 
That is, $v_i$ is matched with $v_k$ if $P_{i,k}=1.$
Note that we view the unknown matched IDs
$X = (X_1,\ldots,X_{n})$ as fixed but unknown to the algorithm; they are only
accessible via the permutation unitary $O_p$. 
Recall that each node communicates with its neighbors via ports. 
We use $\pi_C(i,j)$ to denote the clique partner of $v_i$ via port $j$ in $C$, determined by the port assignment $\pi_C$. 
Intuitively, when the algorithm chooses to send a message from $v_i$ via its local port $j$, we can determine whether this port leads to $w_i$ or to the clique partner $v_k$, where $k = \pi_C(i,j)$, by querying $O_p$. 
If $P_{i,k}=1$, then $v_i$ is matched with $v_k$, and hence, the destination of the message sent by $v_i$ via port $j$ would reach $w_i$; otherwise, it will reach $v_k$ instead. 

In the following, we will describe how to simulate sending messages in a single round $r$ of the distributed algorithm in the query model. 
Note that, in our simulation, we route classical messages in the same way as quantum messages; however, the simulator tags classical messages so that it can enforce the rule that sleeping nodes can be awakened only by classical messages. In particular, when a quantum message reaches a sleeping node, the node does not wake up and its quantum NIC instead reflects a phase of $-1$. 
This behavior can be simulated using an $X$-independent controlled phase gate applied by the simulator when the destination is an asleep node and the channel is quantum. 
  
In a given round, multiple nodes may send messages on some of their local ports (possibly in superposition); our simulation will handle all such sends. %
To simplify the proof, we will mostly restrict our attention to routing the messages sent by center nodes $V$, whose endpoint depends on the unknown $X$. 
Routing messages sent by the nodes in $W$ can be simulated trivially without queries.

A unitary transformation $U$ on the algorithm's registers is called \emph{$X$-independent} if it is fixed in advance and does not change with the values of $X = (X_1,\ldots,X_{n})$, i.e., the matrix representation of $U$ is the same for all choices of $X$ and hence $U$ does not involve any queries to $O_P$.  

For each node $u \in V \cup W$ with degree $\deg(u)$,  we index its 
ports by $j \in [\deg(u)]$. We distinguish between \emph{port-level}
registers, which the distributed algorithm conceptually uses when it sends on
local ports, and \emph{edge-level} registers, which are used for physical
routing along edges.

\paragraph{Port-send registers.}
For each node $u$ and port $j \in [\deg(u)]$ we have a 
\emph{port-send register} $\psend_{u,j}$, %
with a
distinguished vacuum state $\vac$ orthogonal to all valid message states. 
\paragraph{Edge-level registers.}
We introduce \emph{edge-level registers} $\send_{u\to v}$ and
$\recv_{v\leftarrow u}$ for every ordered pair $(u,v)$ of nodes. For pairs with
$\{u,v\} \in E(G)$, $\send_{u\to v}$ and $\recv_{v\leftarrow u}$ are used by the routing unitary $\route$ to model the delivery of messages along the edge
$(u,v)$. 
On the other hand, if $\{u,v\} \notin E(G)$, then the corresponding registers are never updated and remain in the vacuum state for the entire computation. 
We point out that the simulation algorithm does not know whether $\set{u,v} \in E(G)$ in advance, which is the reason for creating a register for each possible edge.
\paragraph{Outbox registers.} 
We additionally introduce $\mu_r$ \emph{outbox} registers $
\outbox_1,\ldots,\outbox_{\mu_r}$, 
where $\mu_r$ denotes the upper bound on the messages sent by $V$ in round $r$%
, i.e., in every computational-basis configuration, at most $\mu_r$ port-send registers associated to $V$ contain a
non-vacuum message. 
Each $\outbox_t$ is a tensor product
$\outbox_t = I_t \otimes J_t \otimes \Msg_t,$
where $I_t$ is a $\lceil\log n\rceil$-qubit register encoding an index $i\in[n]$,
$J_t$ is a $\lceil\log \deg(v_i)\rceil$-qubit register encoding a port index $j$ of $v_i$,
and $\Msg_t$ is the register to store the message that $v_i$ intends to send on port $j$ in round $r$. 
We use the convention that $\outbox_t=\perp$ denotes an empty entry.

\paragraph{Joint network register.}
Lastly, we denote $\mem_u$ the local memory register of node $u$. 
The joint network register at round $r$ is
\[
\net := \bigotimes_{u\in V\cup W}
\left(
\mem_u \otimes
\bigotimes_{j=1}^{\deg(u)} \psend_{u,j}\ \otimes
\bigotimes_{v\in V \cup W  } \send_{u\to v}\ \otimes
\bigotimes_{v\in V \cup W } \receive_{u\gets v}
\right)
\ \otimes\ 
\bigotimes_{t=1}^{\mu_r} \outbox_t.  
\]

We write $\ket{\phi}_{\mathrm{net}}$ to denote an arbitrary state of this
register.
We maintain the invariant that, at the beginning of each simulated round, all
$\psend_{u,j}$, $\send_{u\to v}$, $\receive_{u\leftarrow v}$, and $\outbox_t$ registers are in the
vacuum state $\perp$.

\paragraph{Preprocessing Unitary. }
Let $\pre$ denote the $X$-independent %
computation of round $r$:
For each node $u$, it updates $u$'s local memory $\mem_u$ by moving the contents of $\receive_{u\leftarrow v}$ for each node $v$, and restores $\receive_{u\leftarrow v}$ to $\perp$. \footnote{For simplicity, we omit the port-receive registers that the distributed algorithm conceptually uses.}
For each port $j$, node $u$ writes the messages that it intends to
send on its local port $j$ into the corresponding port-send registers
$\psend_{u,j}$. 
We interpret the content of $\psend_{u,j}$ after $\pre$ as the message that $u$ sends on port $j$ in this round.
Although $\pre$ is a single unitary acting on all nodes' registers, it is obtained by
grouping together the local operations that each node would perform in a given round of the distributed algorithm. 

\paragraph{Push Unitary.}
For each round $r$, 
we define an $X$-independent \emph{push} unitary $\push$ that coherently 
\emph{transfers the contents} of all non-vacuum port-send registers into the outbox.
More precisely, $\push$ maps each non-vacuum $\psend_{v_i,j}$ into some empty
outbox entry, writes the corresponding indices $(i,j)$ into $(I_t,J_t)$,
and sets the original $\psend_{v_i,j}$ to $\perp$ (unused outbox entries remain $\perp$).
$\push$ can be implemented by a reversible sorting or compaction circuit; e.g., see \cite{asharov}.

\paragraph{Lookup Unitary. }
After applying $\pre$ and $\push$, all messages that are to be routed in round $r$
are represented by the outbox entries where   $\outbox_t:=(I_t,J_t,\Msg_t) = (i, j, m)$ indicates that message $m$ needs to be routed from node $v_i$ via port $j$ in round $r$ ($\outbox_t=\perp$ denoting an empty entry). 
The crucial step in the simulation is to move a message stored in an
outbox entry into the correct edge-level register $\send_{v_i\to u}$, where $u$ is the endpoint of
port $j$ of node $v_i$. Note that this endpoint is not known to the algorithm in advance and may
depend on the value of $X_i$.

Conceptually, we need to query the value of $P_{i,k}$, where $k =  \pi_C(i,j)$.
The query returns an answer bit $b$, where $b = 1$ if $k = X_i$ and $b = 0$ otherwise. 
Then we invoke a (classical) \emph{target function $\target$} to determine the neighbor
$u = \target(i,j,b)$, where
$$
\target(i,j,b) =
\begin{cases}
w_i & \text{if } b = 1,\\
v_{k} & \text{if } b = 0,
\end{cases}
$$ 
Note that $\target$ is a classical function determined by the instance
$G$.

Formally, we introduce three distinguished registers, denoted $I$, $J$, and $B$, which the lookup unitary acts on in our simulation. 

\begin{itemize}
  \item \textbf{Clique Register $I$.}
  The register $I$ is a $\Theta\lt( \log n \rt)$-qubit register with
  computational basis states $\{\ket{i}_I : i \in [n]\}$.  The value $i$
  encodes the identity of the clique node $v_i \in V$ whose value $X_i$ (corresponds to the ID of its matched neighbor in the hidden perfect matching) is being queried.  The simulator may prepare arbitrary superpositions
  $\sum_i \alpha_i \ket{i}_I$ (entangled with other registers), subject only
  to the constraints of the query algorithm.

  \item \textbf{Port register $J$.}
  The register $J$ is an $\Theta\lt( \log n \rt)$-qubit register whose computational basis
  states $\{\ket{j}_J\}$ encode a single local port index $j$ of the currently
  selected clique node $v_i$ (for example $j \in [\deg(v_i)]$).  
  Thus, a basis state $\ket{i}_I \ket{j}_J$ represents $j$-th port of node $v_i$. 

  \item \textbf{Answer qubit $B$.}
  The register $B$ is a single qubit, initialised to $\ket{0}_B$ at the beginning of each query to $O_P$. The lookup unitary acts on
  $(I,J,B)$ by flipping $B$ if and only if $P_{i,k}=1$ where $k = \pi_C(i,j).$ 
  
\end{itemize}

The lookup operator is a unitary $\oracle_X$ which acts only on the registers
$I,J,B$ and as the identity on the joint network register $\net$:
$$
  \oracle_X:
  \ket{i}_I \ket{j}_J \ket{b}_B \otimes \ket{\phi}_{\mathrm{net}}
  \longmapsto
  \ket{i}_I \ket{j}_J
  \ket{b \oplus P_{i, \pi_C(i,j)}}_B
  \otimes
  \ket{\phi}_{\mathrm{net}}
  $$
extended by linearity.
Note that the simulator can implement $O_X$ using a single query to $O_p$.

\paragraph{Load/Unload unitaries.}
For each $t\in[\mu_r]$, we define an $X$-independent \emph{load} unitary $L_t$ that swaps the index
pair $(I,J)$ with the index pair $(I_t,J_t)$ stored in the entry of $\outbox_t$, and acts as the
identity on all other registers. 
Note that $L_t$ is fixed in advance and hence is $X$-independent.

\paragraph{Dispatch unitary.}
For each $t\in[\mu_r]$, we define an $X$-independent controlled unitary $D_t$
that \emph{dispatches} the content of the register $\Msg_t$ of $\outbox_t$ into an
edge-level send register, \emph{conditioned on} the registers $I,J,B$.
In detail, for each triple $(i,j,b)$ and each outbox index $t$, let $D^{(i,j,b)}_t$
denote the unitary on $\net$ that swaps the contents of the outbox message register
$\Msg_t$ with the edge-level register $\send_{v_i\to \target(i,j,b)}$ and acts as the
identity on all other registers:
\[
D^{(i,j,b)}_t = \SWAP_{\Msg_t,\ \send_{v_i\to \target(i,j,b)}}\ \otimes\ I_{\text{rest}}.
\]
We then define
\[
D_t = \sum_{i,j,b} \ket{i}_I\ket{j}_J\ket{b}_B\!\bra{i}_I\bra{j}_J\bra{b}_B\ \otimes\ D^{(i,j,b)}_t.
\]
Thus $D_t$ is block-diagonal with respect to the computational basis of $IJB$.
Since each block $D^{(i,j,b)}_t$ is unitary, $D_t$ is unitary.
\footnote{Here $(i,j,b)$ are classical indices; $D_t^{(i,j,b)}$ acts only on $\net$.
The controlled unitary $D_t$ uses the quantum registers $I,J,B$ to select the appropriate block.
}

\paragraph{Routing a single outbox entry.}
To simplify notation, we define $f(i,j) = P_{i, \pi_C(i,j)}$. That is, $f$ is the indicator function that is set to $1$ if and only if the node $v_i$ is matched to node $v_k$ where $k$ is the ID at the endpoint of port $j$ of $v_i$ as assigned by $\pi_C$ (otherwise,  $f(i,j)=0$). In other words, $f(i,j) =1$ if and only if  $X_i = \pi_C(i,j)$. 
To route $\outbox_t=(I_t,J_t,\Msg_t)$, the simulator applies
$$L_t\circ O_X\circ D_t\circ O_X\circ L_t$$ (with $B$ initialised to $\ket{0}$).
If $\outbox_t=\perp$, then $\Msg_t=\ket{\perp}$ and the swap in $D_t$ has no effect. Otherwise,
for $\outbox_t=(I_t,J_t,\Msg_t)=(i,j,m)$, after the above sequence the message $m$ is placed
in $\send_{v_i\to\target(i,j,f({i,j}))}$ and the answer qubit $B$ is returned to $\ket{0}_B$ and can be
reused. 

\paragraph{Edge-delivery unitary.}
The unitary $\route$ is $X$-independent. It acts only on the edge-level
send/receive registers and as the identity on all other registers: 
$$
  \route
  = \bigotimes_{\{u, v\} \in V(G) \times V(G), u \neq v}
    \SWAP_{\send_{u\to v},\,\receive_{v\leftarrow u}}
    \otimes I_{\text{rest}},
$$ 
where $I_{\text{rest}}$ denotes the identity on all registers other than the
edge-level send/receive registers. 

Informally, $\route$ \emph{delivers} each dispatched message: every non-vacuum
edge-send register $\send_{u\to v}$ is swapped into the corresponding
edge-receive register $\receive_{v\leftarrow u}$.

In Appendix~\ref{app:example}, we give a concrete example of how the above unitaries are applied. 

\paragraph{Query complexity.}
All unitaries except the search operator $O_X$ are $X$-independent.
Moreover, for each $t\in[\mu_r]$ the simulator invokes $O_X$ twice (once to
compute and once to uncompute the answer qubit) within the sequence
$$L_t\circ O_X\circ D_t\circ O_X\circ L_t.$$
Hence the simulation of round $r$ uses at most $2\mu_r$ queries. (Recall that messages sent by the nodes in $W$ can be simulated trivially without queries.)
Summing over all rounds, the total query complexity is
$O\left(\sum_r \mu_r\right)=O(\mu)$. 

\paragraph{Solution to $\desc$.}
To complete the proof of Lemma~\ref{lem:qsim}, it remains to show that any feasible solution to $\matching$ in $G$ induces a valid solution to the $\desc$ problem.  
By construction, $X_i = k$ implies that
$P_{i,k} = P_{k,i}=1$ and $P_{i,k'} = P_{k',i}=0$ for $k'\neq k$. 
Therefore, a solution to $\matching$  reveals the permutation matrix $P$, and the single-bit descriptor required by $\desc$ can be obtained directly from $P$, yielding a valid solution to $\desc$.
\end{proof}

\appendix
\section{Example: a clique node with four ports.} \label{app:example}
Suppose a fixed clique node $v_i$ has four ports $1,2,3,4$. In one round, the
local quantum operation at $v_i$ prepares the superposition
\[
  \frac{1}{\sqrt{2}}\,
  \bigl|\text{send } m_1 \text{ on port } 1,\ m_2 \text{ on port } 2\bigr\rangle
  \;+\;
  \frac{1}{\sqrt{2}}\,
  \bigl|\text{send } m_3 \text{ on port } 3,\ m_4 \text{ on port } 4\bigr\rangle.
\]
Upon measuring this state in the computational basis, the algorithm sends
$m_1$ and $m_2$ on ports $1$ and $2$, respectively, with probability $1/2$,
and $m_3$ and $m_4$ on ports $3$ and $4$, respectively, with probability $1/2$.

After applying the preprocessing unitary $\pre$, the port-send registers
at $v_i$ are in the corresponding superposition:
\[
  \ket{\psi_0}:=\frac{1}{\sqrt{2}}\bigl(\ket{\phi_{12}}+\ket{\phi_{34}}\bigr)\otimes\ket{\mathrm{rest}}.
\]

where
\begin{align*}
  \ket{\phi_{12}}
  &=
  \ket{m_1}_{\psend_{v_i,1}}
  \ket{m_2}_{\psend_{v_i,2}}
  \ket{\vac}_{\psend_{v_i,3}}
  \ket{\vac}_{\psend_{v_i,4}}
  \otimes
  \bigotimes_{u \in  V \cup W} \ket{\vac}_{\send_{v_i\to u}},\\
  \ket{\phi_{34}}
  &=
  \ket{\vac}_{\psend_{v_i,1}}
  \ket{\vac}_{\psend_{v_i,2}}
  \ket{m_3}_{\psend_{v_i,3}}
  \ket{m_4}_{\psend_{v_i,4}}
  \otimes
  \bigotimes_{u \in V \cup W } \ket{\vac}_{\send_{v_i\to u}},
\end{align*}
and $\ket{\mathrm{rest}}$ contains all other registers (including other nodes’
memory and ports).  

After applying the push unitary $\push$, the two non-vacuum port-send registers
in each branch are coherently transferred into the two outbox entries (and all
$\psend_{v_i,j}$ are reset to $\vac$).  Thus we obtain
\[
  \ket{\psi_1}
  \;:=\;
  \frac{1}{\sqrt{2}}
  \Bigl(
    \ket{\phi'_{12}} + \ket{\phi'_{34}}
  \Bigr)\otimes \ket{\mathrm{rest}},
\]
where $\ket{\mathrm{rest}}$ again contains all registers not shown explicitly,
and
\begin{align*}
  \ket{\phi'_{12}}
  &=
  \ket{i}_{I_1}\ket{1}_{J_1}\ket{m_1}_{\Msg_1}
  \ \otimes\
  \ket{i}_{I_2}\ket{2}_{J_2}\ket{m_2}_{\Msg_2}
  \ \otimes\
  \bigotimes_{j=1}^4 \ket{\vac}_{\psend_{v_i,j}}
  \ \otimes\
  \bigotimes_{u \in V \cup W} \ket{\vac}_{\send_{v_i\to u}},\\
  \ket{\phi'_{34}}
  &=
  \ket{i}_{I_1}\ket{3}_{J_1}\ket{m_3}_{\Msg_1}
  \ \otimes\
  \ket{i}_{I_2}\ket{4}_{J_2}\ket{m_4}_{\Msg_2}
  \ \otimes\
  \bigotimes_{j=1}^4 \ket{\vac}_{\psend_{v_i,j}}
  \ \otimes\
  \bigotimes_{u \in V \cup W} \ket{\vac}_{\send_{v_i\to u}}.
\end{align*}
(Any unused outbox entries, if $\mu_r>2$, remain in the vacuum state.)

\paragraph{Routing outbox entry $t=1$.}
The simulator applies the sequence $L_1\circ \oracle_X\circ D_1\circ \oracle_X\circ L_1$,
with $B$ initialised to $\ket{0}_B$.  Consider first the branch $\ket{\phi'_{12}}$,
in which $(I_1,J_1,\Msg_1)=(i,1,m_1)$.  After loading by $L_1$, the workspace
registers contain $(I,J)=(i,1)$; then $\oracle_X$ computes the predicate
$b = f(i,1) =  P_{i, \pi_C(i,1)}$ into $B$; then $D_1$ swaps $\Msg_1$ into the edge-level
register $\send_{v_i\to\target(i,1,b)}$; finally the second application of
$\oracle_X$ uncomputes $B$ and the final $L_1$ restores $(I_1,J_1)$ in the outbox.
Since $\send_{v_i\to u}$ registers start in $\vac$ for all $u \in V\cup W $, this has the net effect
\[
  \ket{m_1}_{\Msg_1}\bigotimes_{u\in V \cup W}\ket{\vac}_{\send_{v_i\to u}}
  \;\longmapsto\;
  \ket{\vac}_{\Msg_1}\,
  \ket{m_1}_{\send_{v_i\to x}}
  \bigotimes_{u \in V \cup W \setminus \{x\} }
  \ket{\vac}_{\send_{v_i\to u}},
\]
where $x = \target(i,1,f(i,1))$,  
while returning $B$ to $\ket{0}_B$ and leaving all other registers unchanged.
In the branch $\ket{\phi'_{34}}$, the same sequence routes $\Msg_1=m_3$ using
the indices $(i,3)$, placing $m_3$ in $\send_{v_i\to \target(i,3,f(i,3))}$ and
resetting $\Msg_1$ to $\vac$.

\paragraph{Routing outbox entry $t=2$.}
Next the simulator applies $L_2\circ \oracle_X\circ D_2\circ \oracle_X\circ L_2$.
In the branch $\ket{\phi'_{12}}$, the outbox entry $\outbox_2$ contains $(i,2,m_2)$,
so the sequence places $m_2$ into $\send_{v_i\to\target(i,2,f(i,2))}$ and resets
$\Msg_2$ to $\vac$ (again returning $B$ to $\ket{0}_B$).  In the branch $\ket{\phi'_{34}}$,
the same sequence places $m_4$ into $\send_{v_i\to\target(i,4,f(i,4))}$ and resets $\Msg_2$.

After routing both outbox entries, we therefore obtain a state $\ket{\psi_2}$ in which
all port-send and outbox message registers are in $\vac$, and the edge-send registers
encode the intended outgoing messages:
\[
  \ket{\psi_2}
  \;=\;
  \frac{1}{\sqrt{2}}
  \Bigl(
    \ket{\Phi_{12}^{\mathrm{send}}} + \ket{\Phi_{34}^{\mathrm{send}}}
  \Bigr)\otimes \ket{\mathrm{rest}},
\]
where $\ket{\Phi_{12}^{\mathrm{send}}}$ has $m_1$ in $\send_{v_i\to\target(i,1,f(i,1))}$
and $m_2$ in $\send_{v_i\to\target(i,2,f(i,2))}$ (all other $\send$ registers vacuum),
and $\ket{\Phi_{34}^{\mathrm{send}}}$ has $m_3$ in $\send_{v_i\to\target(i,3,f(i,3))}$
and $m_4$ in $\send_{v_i\to\target(i,4,f(i,4))}$.

\paragraph{Edge delivery.}
Finally, applying the edge-delivery unitary $\route$ swaps each populated
edge-send register into the corresponding edge-receive register at the
neighbor, i.e., $\send_{u\to v}$ is swapped into $\receive_{v\leftarrow u}$ for
every $\{u,v\}\in E$ (in both directions).  Thus the messages dispatched from
$v_i$ are delivered to the appropriate neighbors. 

\paragraph{Query count in this example.}
Since each routed outbox entry uses exactly two invocations of $\oracle_X$,
and here $\mu_r=2$, this round uses exactly $4$ queries.

\bibliographystyle{alpha}
\bibliography{refs}
\end{document}